\newtheorem{observation}[theorem]{Observation}
\newcommand{\CH}{{\ensuremath{\mathcal{C}}}\xspace}
\newcommand{\VRF}{{\ensuremath{\mathcal{A}}}\xspace}
\newcommand{\Cp}{{\ensuremath{\mathit{WP_{\CH}}}}\xspace}
\newcommand{\Ct}{{\ensuremath{\mathit{WC_\mathcal{T}}}}\xspace}
\newcommand{\SW}{{\ensuremath{\mathit{WB_\mathcal{Y}}}}\xspace}
\newcommand{\am}{\alpha_m\xspace}
\newcommand{\aw}{\alpha_w\xspace}
\newcommand{\typeA}{{\sc Linear}\xspace}
\newcommand{\typeB}{{\sc Exponential}\xspace}
\newcommand{\typeC}{{\sc Boinc}\xspace}
\newcommand{\typeD}{{\sc Boinc}\xspace}
\newcommand{\remove}[1]{}
\title{Internet Computing: Using Reputation to Select Workers from a Pool\vspace{-1em}}  	
\author{Evgenia Christoforou\inst{1,2} \and Antonio Fern\'andez Anta\inst{1} \and
Chryssis Georgiou\inst{3} \and \newline Miguel A. Mosteiro\inst{4} }
\institute{
IMDEA Networks Institute, Madrid, Spain
\and 
Universidad Carlos III de Madrid, Madrid, Spain
\and
University of Cyprus, Nicosia, Cyprus
\and
Kean University, Union, NJ, USA\vspace{-1em} 
}   		     				           
\begin{document}                    

\maketitle          

\begin{abstract}
The assignment and execution of tasks over the Internet is an inexpensive solution in contrast with supercomputers. We consider an Internet-based Master-Worker task computing approach, such as 
SETI@home. A master process sends tasks, across the Internet, to worker processors. Workers execute, and report back a result. Unfortunately, the disadvantage of this approach is the unreliable nature of the worker processes. Through different studies, workers have been categorized as either malicious (always report an incorrect result), altruistic (always report a correct result), or rational (report whatever result maximizes their benefit). We develop a reputation-based mechanism that guarantees 
that, eventually, the master will always be receiving the correct task result.
We model the behavior of the rational workers through reinforcement learning, and we present three different reputation types to choose, for each computational round, the most reputable from a pool of workers. As workers are not always available, we enhance our reputation scheme to select the most responsive workers. We prove sufficient conditions for eventual correctness under the different reputation types. Our analysis is complemented by simulations exploring various scenarios. 
Our simulation results expose interesting trade-offs among the different reputation types, workers availability, and cost.\vspace{-1.2em} 
\end{abstract}

\begin{keywords}
Volunteer computing, reinforcement learning, reputation, worker reliability, task computing.
\end{keywords}
\vspace{-1.3em}

\section{Introduction}
\vspace*{-1em}
Internet-based computing has emerged as an inexpensive alternative for scientific high-performance computations. 
The most popular form of Internet-based computing is volunteer computing, where computing resources are volunteered
by the public to help solve (mainly) scientific problems. BOINC~\cite{boinc} is a popular platform where volunteer computing projects run, 
such as SETI@home~\cite{SETI}. Profit-seeking computation platforms, such as Amazon's Mechanical Turk~\cite{turk}, have also
become popular. One of the main challenges for further exploiting the promise of such platforms is the untrustworthiness of the
participating entities~\cite{boinc, volunteer, Kondoetal2007, Heienetal09}.

In this work we focus on Internet-based master-worker task computing, 
where a master process sends tasks, across the Internet, to worker processes to compute and return the result. Workers, however, might report incorrect results. Following~\cite{opodis2013, IEEETC14}, we consider three types of worker. Malicious\added[ev]{\footnote{We call these workers malicious for consistency with previous literature on Volunteer Computing~\cite{boinc}. This must not be confused with Byzantine malice assumed in classical distributed computing.}} workers that always report an incorrect result, altruistic workers that always report a correct result, and rational workers that report a result driven by their self-interest. In addition, a worker (regardless of its type) might be unavailable (e.g., be disconnected, be busy performing other tasks, etc). 
%
Our main contribution is a computing system where the master eventually obtains always the correct task result despite the above shortcomings. Our mechanism is novel in two fronts: 
(\emph{i}) it leverages the possibility of changing workers over time, given that the number of workers willing to participate is larger than the number of workers needed, and 
(\emph{ii}) it is resilient to some workers being unavailable from time to time.


Worker unreliability in master-worker computing has been studied from  
both a classical Distributing Computing approach and a Game Theoretic one. The first 
treats workers as malicious 
or altruistic.
Tasks
are redundantly allocated to different workers, and voting protocols that tolerate malicious workers have been designed 
(e.g., 
\cite{Sarmenta02, ALEX, PPL12}). The Game Theoretic approach views the workers as rational~\cite{Halp06,UDC,rational}, 
who follow 
the strategy that would maximize 
their benefit. In the latter approach, 
incentive-based mechanisms have been developed (e.g., \cite{CCS,NCA08}) that 
induce
workers to act correctly.

Other works  (e.g., \cite{opodis2013, IEEETC14}) have considered the co-existence of all three types of worker. 
In~\cite{IEEETC14}, 
a ``one-shot'' interaction between master and workers was implemented.
In that work, 
the master assigns tasks to workers 
without using 
knowledge of past interactions (e.g., on the behavior of the workers). 
In~\cite{opodis2013}, a mechanism was designed taking advantage of the repeated interaction (rounds) of the master with the workers. The
mechanism employs reinforcement learning~\cite{RLbook} both for  
the master and for the workers. In each round, the master
assigns a task to the same set of workers (which 
are assumed to be always available). The master may audit (with a cost) 
the responses of the workers and a reward-punishment scheme is employed. Depending on the answers, the master adjusts its probability
of auditing. 
Rational workers 
cheat (i.e., respond with an incorrect result to avoid the cost of computing) with some probability,
which over the rounds increases or decreases depending on the incentive received (reward or punishment). Rational workers
have an aspiration level~\cite{BM55} which determines whether a received payoff was satisfactory or not. To cope with malicious workers (whose behavior is not affected by the above mentioned learning scheme) a reputation scheme~\cite{survey07} was additionally employed.
The main objective is to ``quickly'' reach a round in the computation after which the master always receives the correct task result, 
with minimal auditing.


Unlike assumed in~\cite{opodis2013} (and most previous literature), in practice 
workers are not always available. For instance, Heien et al.~\cite{Heienetal09}
have found that in BOINC~\cite{boinc} only around 5\% of the workers are available more than 80\% of the time, and that half of the workers are available less than 40\% of the time. 
In this work, we extend the work in~\cite{opodis2013} to cope with worker unavailability. 

A feature that has not been leveraged in~\cite{opodis2013} and previous works is
the scale of Internet-based master-worker task computing systems. For example, in BOINC~\cite{boinc_main} 
active workers are around a few hundred thousand. 
In such a large system,
replicating the task and sending it to all workers is 
neither feasible nor practical.
On the other hand, randomly selecting a small number of workers to send the task does not guarantee 
correctness with minimum auditing.
For instance, consider a pool of workers where the malicious outnumber those needed for the computation. Then, there is a positive probability that only malicious workers are selected and the master would have to audit always to obtain the correct result.
All previous works assume the existence of a fixed/predefined set of workers that the master always contacts. In this work we consider the existence of a pool of $N$ workers out of which the master chooses $n<N$. 
\vspace*{-2em}
\paragraph{Our contributions.} 

\begin{itemize}\vspace*{-.8em}
\item 
We present a mechanism (in Section~\ref{rep:mechanism}) where the master chooses the most reputable workers for each round of computation, allowing the system to eventually converge to a state where the correct result will be always obtained, with minimal auditing.
Our mechanism does not require workers to be available all the time.
To cope with the unavailability of the workers, we introduce a \emph{responsiveness reputation} that conveys the percentage of task assignments to which the worker replies with an answer.
The responsiveness reputation is combined with a \emph{truthfulness reputation} that conveys the reliability of the worker.
We enrich our study considering three types of truthfulness reputation. Namely, \typeD reputation (inspired in the ``adaptive replication" of BOINC), \typeB reputation (that we presented in~\cite{opodis2013}), and \typeA reputation (inspired on the work of Sonnek et al.~\cite{sonnek07}).
\item 
We also show formally (in Section~\ref{sec:analysis}) negative and positive results regarding the feasibility of achieving correctness in the long run in the absence of rational workers. Specifically, we show configurations (worker types, availability, etc.) of the pool of workers such that correctness cannot be achieved unless the master always audits, and the existence of configurations such that eventually correctness is achieved forever with minimal auditing.
\item 
We evaluate experimentally (in Section~\ref{Ncn:simulation}) our mechanism with extensive simulations under various conditions. Our simulations complement the analysis taking into account scenarios where rational workers exist. The different reputation types are compared showing trade-offs between reliability and cost.
\vspace*{-1em}
\end{itemize}



\section{Model}
\label{rep:model}
\vspace*{-.5em}

\paragraph{\bf \em Master-Worker Framework.}
\noindent
We consider a master and a pool (set) of workers ${\cal N}$, where $|{\cal N}|=N$. The computation is broken into
{\em rounds} $r=1, 2,...$. In each round $r$, the master selects a set $W^{r}$ of $n < N$ workers, and sends them a task. The workers in $W^{r}$ are supposed to compute the task and return the result, but may not do so (e.g., unavailable computing other task). 
The master, after waiting for a fixed time $t$, proceeds with the received replies. Based on those replies, 
the master must decide which answer to take as the correct result for this round. 
The master employs a reputation mechanism put in place to choose the $n$ most reputable workers in every round.
We assume that tasks 
have a unique solution; although such limitation reduces the scope of application of the 
presented mechanism~\cite{TACB05}, there are plenty of computations where the correct 
solution is unique: e.g., any mathematical function.\vspace{-.8em} 


\paragraph{\bf \em Worker unavailability.} In Internet-based master-worker computations, and especially in volunteering computing, workers
are not always available to participate in a computation~\cite{Heienetal09} (e.g., they are off-line for a particular period of time).  
We assume that each worker's availability is stochastic and independent of other workers. 
Formally, we let $d_i >0$ be the probability that the master receives the reply from worker $i$ within time $t$ 
(provided that the worker was chosen by the master to participate in the computation for the given round $r$, i.e., $i\in W^{r}$). In other words, this is
the probability that the worker is available to compute the task assigned.\vspace{-1em}  


\paragraph{\bf \em Worker types.} We consider three types of workers: {\em rational, altruistic,} and {\em malicious}.
Rational workers are selfish in a game-theoretic sense and their aim is to maximize their utility (benefit). In the context of this paper, a worker is {\em honest} 
in a round, when it truthfully computes and returns the correct result, and it {\em cheats} when it returns some incorrect value. 
Altruistic and malicious workers have a predefined behavior: to always be honest and cheat respectively.
Instead, a rational worker decides to be honest or cheat depending on which strategy maximizes its utility. We denote by $p_{Ci}(r)$ the probability of a rational worker $i$ cheating in round $r$,
provided that $i\in W^{r}$.  
The worker adjusts this probability over the course of the multiround computation
using a reinforcement learning approach. 
The master is not aware of each worker type, neither of the distribution over types. That is, our mechanism does not rely on any statistical information. 

While workers make their decision individually and with no coordination, following \cite{Sarmenta02,PPL12}, we assume that all the workers that cheat in a round return the same incorrect value. 
This yields a worst case scenario for the master to obtain the correct result using a voting mechanism. 
This assumption subsumes models where cheaters do not necessarily return the same answer, and it can
be seen as a weak form of collusion.\vspace{-.8em} 


\paragraph{\bf \em Auditing, Payoffs, Rewards and Aspiration.}
\noindent
When necessary, the master employs {\em auditing} and
\emph{reward/punish} schemes to induce the rational workers to be honest. 
In each round, the master may decide to audit the response of the workers, 
at a cost. 
In this work, auditing means that the master computes the task by itself, and checks
which workers have been honest. We denote by $p_\VRF(r)$ the probability of the master auditing the 
responses of the workers in round $r$. The master can change this auditing probability over the course of the computation,
but restricted to a minimum value $p_\VRF^{min}>0$.
When the master audits, it can accurately reward and punish workers. 
When the master does not audit, it rewards only those in the weighted majority (see below)
of the replies received and punishes no one. 

We consider three worker payoff parameters: (a)$\Cp$: worker's punishment for being caught cheating,
(b) $\Ct$: worker's cost for computing a task, and (c) $\SW$: worker's benefit (typically payment) from the
master's reward. 
As in~\cite{BM55}, we also assume that a worker $i$ has an {\em aspiration} $a_i$, which is the minimum benefit that worker $i$ expects to obtain in a round. 
We assume that the master has the freedom of choosing $\SW$ and $\Cp$ with the
goal of satisfying {\em eventual correctness}, defined next.
E.g., in order
to motivate the worker to participate in the computation, the master ensures that 
$\SW - \Ct \geq a_i$; in other words, the worker has the potential of its aspiration to
be covered even if it computes the task.\vspace{-.8em} 

\paragraph{\bf \em Eventual Correctness.} \noindent
The goal of the master is to eventually obtain a reliable computational platform: After some 
finite number of rounds, the system must guarantee that the master obtains the correct task results 
in every round with probability $1$ and audits with probability $p_\VRF^{min}$. 
We call such property \emph{eventual correctness}.
Observe that eventual correctness implies that eventually the master receives at least one (correct) reply in every round.  \vspace{-.8em}

\paragraph{\bf \em Reputation.} \noindent
The reputation of each worker is measured and maintained by the master.
Reputation is used by the master to cope with the uncertainty about the workers' truthfulness and availability. 
In fact, the workers are unaware that a reputation scheme is in place, and their interaction with 
the master does not reveal any information about reputation; i.e., the payoffs do not depend on a worker's reputation. 
The master wants to assign tasks to workers that are reliable, that is, workers that are both responsive {\em and} truthful.
Hence, we consider the worker's reputation as the product of two factors: responsiveness reputation and truthfulness reputation.
Thus, the malicious workers will obtain a low reputation fast due to their low truthfulness reputation, and also the workers that are generally unavailable will get a low reputation due to their low responsiveness reputation. Consequently, these workers will stop being chosen by the master.

More formally, we define the reputation of a worker $i$ as $\rho_i = \rho_{rs_i} \cdot \rho_{tr_i}$, where 
$\rho_{rs_i}$ represents the responsiveness reputation  and 
$\rho_{tr_i}$ the truthfulness reputation of worker $i$. 
We also define the reputation of a set of workers $Y\subseteq W$ as the aggregated reputation of all workers in $Y$. That is, $\rho_Y(r)=\sum_{i \in Y} \rho_i(r)$.

In this work, we consider three truthfulness reputation types: \typeA, \typeB, and \typeC. 
In the \typeA reputation type (introduced in~\cite{sonnek07}) the reputation changes at
a linear rate. 
The \typeB reputation type (introduced in~\cite{opodis2013}) is ``unforgiving'', in the sense that the reputation of a worker caught cheating will never increase. The reputation of a worker in this type changes at an exponential rate. 
The \typeC reputation type is inspired by BOINC~\cite{boinc_reputation}. In the BOINC system this reputation method is used to avoid redundancy if a worker is considered honest\footnote[2]{In BOINC, honesty means that the worker's task result agrees with the majority, while in our work this decision is well-founded, since the master audits.}.
For the responsiveness reputation we use the \typeA reputation, adjusted for responses. 
For the worker's availability it is natural to use a ``forgiving" reputation, especially when considering volunteer computing.    
For the detailed description of the reputation types we introduce some necessary notation as follows. \vspace*{-.5em}
\begin{itemize}[leftmargin=2mm]
\item [] {$select_i(r)$:} the number of rounds the 
master selected worker $i$ up to round $r$.
\item [] {$reply\_select_i(r)$:} the number of rounds up to round $r$ in which worker $i$ was selected and the master received a reply from $i$.
\item [] {$audit\_reply\_select_i(r)$:} the number of rounds up to round $r$ where the master selected worker $i$, received its reply
and audited.
\item [] {$correct\_audit_i(r)$:} the number of rounds up to round $r$ where
the master selected worker $i$, received its reply, audited and $i$ was truthful. 
\item [] {$streak_i(r)$:} the number of rounds $\leq r$ in which worker $i$ was selected, audited, and replied correctly after the latest round in which it was selected, audited, and caught cheating.
\end{itemize}\vspace*{-.5em}

%
%
Then, the reputation types we consider are as follows.
\begin{framed}
\begin{small}
\noindent
{\bf Responsiveness reputation:}
$\rho_{rs_i}(r) = \frac{reply\_select_i(r)+1}{select_i(r)+1}.$
\newline
{\bf Truthfulness reputation:}
\begin{itemize}[leftmargin=2mm]
\item 
[] {\bf \typeA:} $\displaystyle
\rho_{tr_i}(r) = \frac{correct\_audit_i(r)+1}{audit\_reply\_select_i(r)+1}.
$
\item
[] {\bf \typeB:} 
 $\displaystyle \rho_{tr_i}(r) =  \varepsilon^{audit\_reply\_select_i(r)-correct\_audit_i(r)},$
where $\varepsilon \in (0,1)$.
%
%
%

\item [] {\bf \typeC:} 
$
  \rho_{tr}(r)=\begin{cases}
    0, & \text{if } streak(r)<10.\\
    1-\frac{1}{streak(r)}, & \text{otherwise}.
  \end{cases}$
\end{itemize}\vspace*{-1.5em}
\end{small}
\end{framed}

All workers are assumed to have the same initial reputation before the master interacts with them. 
The goal of the above definitions is for workers who are responsive {\em and} truthful to eventually have high reputation, whereas workers who are not responsive {\em or} not truthful, to eventually have low reputation. 


\vspace*{-3mm}
\section{Reputation-based Mechanism}
\label{rep:mechanism}
\vspace*{-5mm}
We now present our reputation-based mechanism. 
The mechanism is composed by an algorithm run by the master and an algorithm
run by each worker.\vspace{-.7em} 

\paragraph{\bf \em Master's Algorithm.} \noindent
The algorithm followed by the master, Algorithm~\ref{alg4}, begins by choosing the initial probability of auditing and the initial reputation (same for all workers). 
The initial probability of auditing will be set according to the information the master has
about the environment (e.g., workers' initial $p_C$). 
For example, if it
has no information about the environment, a natural approach would be to initially set $p_\VRF=0.5$ or $p_\VRF=1$ (as a more conservative approach).
The master also chooses the truthfulness reputation type to use.\vspace{1em} 

\lstset{columns=fullflexible,tabsize=3,basicstyle=\footnotesize,identifierstyle=\rmfamily\textit,
mathescape=true,morekeywords={choose,process,const,when,elsif,procedure,null,function,do,return,
foreach,begin,var,if,then,else,rcv,rx_user_events,rx_network_events,for,end,receive,send,upstream,downstream,
while,do,forward,backward,upon,wait,audit,update,accept},literate={:=}{{$\leftarrow$ }}1{->}{{ $\rightarrow$ }}1}
\lstset{escapeinside={('}{')}}
\newlength\listingnumberwidth
\setlength\listingnumberwidth{\widthof{00} + 0.5em}
\lstset{numbers=left, xleftmargin=\listingnumberwidth, numbersep=1em}
\begin{figure}[t!]
\vspace{-1em}
\begin{algorithm}[H]
\caption{Master's Algorithm}
\label{alg4}\vspace{-.8em}
\begin{lstlisting}
$p_\VRF$ := $x$, where $x \in [p_\VRF^{min}, 1]$
for $i$ := $0$ to $N$ do
    $select_i$ := $0$;$~reply\_select_i$:=$0$;$~audit\_reply\_select_i$:=$0$;$~correct\_audit_i$ := $0; streak_i\gets 0$
    $\rho_{{rs}_i}$ := $1$; initialize $\rho_{{tr}_i}$		// initially all workers have the same reputation
for $r$ := $1$ to  $\infty$ do   
	$W^r$ := $\{i\in {\cal N}: i$ is chosen as one of the $n$ workers with the highest $\rho_i=\rho_{{rs}_i}\cdot \rho_{{tr}_i}$  $\}$				
	$\forall i\in W^r: select_i$ := $select_i + 1$
	send a task $T$ to all workers in $W^r$
	collect replies from workers in $W^r$ for $t$ time
	wait for $t$ time collecting replies as received from workers in $W^r$
	$R$ := $\{i \in W^r :$ a reply from $i$ was received by time $t \}$  		
	$\forall i\in R : reply\_select_i$ := $reply\_select_i + 1$
   update responsiveness reputation $\rho_{rs_i}$ of each worker	$i\in W^r$ 	
   audit the received answers with probability $p_\VRF$
   if the answers were not audited then
      accept the value $m$ returned by workers $R_m \subseteq R$,
            where $\forall m', \rho_{tr_{R_m}} \geq \rho_{tr_{R_{m'}}}$			// weighted majority of workers in $R$
   else 		 		// the master audits
      foreach $i\in R$ do
      	$audit\_reply\_select_i$ := $audit\_reply\_select_i + 1$    	
      	if $i \in F$ then $streak_i$:=$0$ 			// $F\subseteq R$ is the set of responsive workers caught cheating
      	else   $correct\_audit_i$ :=$correct\_audit_i + 1$, $streak_i$:=$streak_i+1$			// honest responsive workers
      	update truthfulness reputation $\rho_{tr_i}$		 // depending on the type used
      if $\rho_{{tr}_R}=0$ then $p_\VRF$:=$\min\{1,p_\VRF+\alpha_m\}$
      else
          $p_\VRF'$ := $p_\VRF+ 
                               \am(\rho_{{tr}_F} / \rho_{{tr}_R}-\tau)$ 
          $p_\VRF$ := $\min\{1, \max\{p_\VRF^{min}, p_\VRF'\}\}$ 
	$\forall i\in W^r :$ return $~\Pi_i~\mathit{to}$	worker $i$ 					// the payoff  of workers in $W^r \setminus R$ is zero 
\end{lstlisting}\vspace{-.5em}
\end{algorithm}
\hfill\vspace{-4.5em}
\end{figure}\vspace{-1em}

\begin{figure}[h!]
\vspace{-1.5em}
\begin{algorithm}[H]
\caption{Algorithm for Rational Worker $i$}
\label{alg2}\vspace{-.5em}
\begin{lstlisting}
$p_{Ci}$ := $y$, where $y \in [0, 1]$
repeat forever
    wait for a task $T$ from the master
    if available then
        decide whether to cheat or not independently with distribution $P(cheat)=p_{Ci}$  
        if the decision was to cheat then
           send arbitrary solution to the master
           get payoff $\Pi_i$
           $p_{Ci}$ := $\max\{0,\min\{1,p_{Ci} + \aw(\Pi_i-a_i)\}\}$
        else 
           send $compute(T)$ to the master
           get payoff $\Pi_i$
           $p_{Ci}$ := $\max\{0,\min\{1,p_{Ci} - \aw(\Pi_i - \Ct-a_i)\}\}$   
\end{lstlisting}\vspace{-.5em}
\end{algorithm}\vspace{-4em}
\end{figure}

At the beginning of each round, the master chooses the $n$ most reputable workers out of the total $N$ 
workers (breaking ties uniformly at random) and sends them a task $T$. 
In the first round, since workers have the same reputation, the choice is uniformly at random.
Then, after waiting $t$ time to receive the replies from the selected workers, the master proceeds with the mechanism.
The master updates the responsiveness reputation and audits the answers with probability $p_\VRF$. 
In the case the answers are not audited, the master accepts the value returned by the weighed majority.
In Algorithm~\ref{alg4}, $m$ is the value returned by the weighted majority and $R_m$ is the subset of workers that returned $m$. 
If the master audits, it updates the truthfulness reputation and the audit probability for the next round.  
Then, the master rewards/penalizes the workers as follows. If the master audits and a worker $i$ is a cheater (i.e., $i\in F$), then $\Pi_i = -\Cp$; if $i$ is honest, then $\Pi_i = \SW$. If the master does not audit, and $i$ returns the value of the weighted majority
 (i.e., $i\in R_m$), then
$\Pi_i = \SW$, otherwise $\Pi_i=0$.

In the update of the audit probability $p_\VRF$, we include a threshold, denoted by $\tau$, that represents the master's \emph{tolerance} to cheating
(typically, we will assume $\tau=1/2$ in our simulations). If the ratio of the aggregated reputation of cheaters with respect to the total is larger than $\tau$, $p_\VRF$ is increased, and decreased otherwise.
The amount by which $p_\VRF$ changes depends on the difference between these
values, modulated by a {\em learning rate} $\am$~\cite{RLbook}. This latter value determines to what extent the newly acquired information will override the old information. For example, if $\am=0$ the master will never adjust $p_\VRF$.\vspace{-.7em} 



\paragraph{\bf \em Workers' Algorithm.} \noindent
Altruistic and malicious workers have predefined behaviors.
When they are selected and receive a task $T$ from the master, if they are available, they compute the task (altruistic) or fabricate an arbitrary solution (malicious), replying accordingly.  
 If they are not available, they do not reply.
Rational workers 
run the algorithm described in Algorithm~\ref{alg2}.
The execution of the algorithm begins with a rational worker $i$ deciding an initial probability of cheating $p_{Ci}$.  
Then, the worker waits to be selected and receive a task $T$ from the master. When so, and if it is available at the time, then with probability $1-p_{Ci}$, worker $i$ 
computes the task and replies to the master with the correct answer. Otherwise, 
it fabricates an answer, and sends the incorrect response to the master. 
After receiving its payoff, worker $i$ changes its $p_{Ci}$ according to payoff $\Pi_i$, the chosen strategy (cheat or not cheat), and its aspiration $a_i$. Similarly to the master, the workers have a {\em learning rate} $\aw$. We assume that all workers have the same learning 
rate, that is, they learn in the same manner (in~\cite{RLbook}, the learning rate is called step-size). \added[ev]{In a real platform the workers’ learning rate can slightly vary (since workers in these platforms have similar profiles), making some worker more
or less susceptible to reward and punishment. Using the same learning rate for all workers is representative of what happens in a population of 
different values with small variations around some mean.}\vspace{-1em}

%



\section{Analysis}
\label{sec:analysis}
\vspace*{-5mm}
In this section, we prove some properties of the system. 
We start by observing that, in order to achieve eventual correctness, it is necessary to change workers over time.
{\em Omitted proofs are given in the Appendix.} 
\vspace*{-2mm}
\begin{observation}
\label{obs1}
If the number of malicious workers is at least $n$ and the master assigns the task to the same workers in all rounds, eventual correctness cannot be \replaced[ev]{guaranteed}{satisfied}.
\end{observation}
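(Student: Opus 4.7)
The plan is to construct an adversarial configuration consistent with the hypotheses in which eventual correctness provably fails; since eventual correctness is a universal guarantee that must hold across the model's allowed scenarios, exhibiting a single failing scenario suffices. The central idea is that when there are at least $n$ malicious workers in the pool and the master commits to a fixed set $W$ of $n$ workers in every round, it is possible (and in fact occurs with positive probability in round $1$, since all workers start with identical reputation and ties are broken uniformly) that $W$ consists entirely of malicious workers.

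First I would fix such an all-malicious $W$ and trace the received replies round by round. By the weak-collusion assumption built into the model, all malicious workers that are responsive in round $r$ return a common incorrect value $v^*_r$. Hence every reply the master collects in round $r$ equals $v^*_r$, so the weighted majority computed over the responsive subset $R \subseteq W$ is $v^*_r$ whenever $R \neq \emptyset$, and no value is available at all when $R = \emptyset$. Thus on this configuration the only route for the master to output a correct value in round $r$ is to audit.

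Finally I would contradict the definition of eventual correctness. That definition requires a finite round $r_0$ after which $p_\VRF$ has settled at $p_\VRF^{min}$ and the master's output is correct with probability $1$ in every subsequent round. Under the configuration constructed above, however, the probability of a correct output in any round $r \geq r_0$ is at most $p_\VRF(r) = p_\VRF^{min}$, which is strictly less than $1$ in the non-trivial regime $p_\VRF^{min} < 1$ that the mechanism targets; this contradicts the probability-$1$ requirement. The only delicate points are a brief case check when $R = \emptyset$ (the master has no output, so the round is trivially non-correct) and an explicit statement that the observation is understood in the regime $p_\VRF^{min} < 1$, since with $p_\VRF^{min} = 1$ the master audits every round and the claim would be vacuous; neither point poses a genuine obstacle.
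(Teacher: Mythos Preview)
Your proposal is correct and follows essentially the same argument as the paper: exploit the positive probability that the fixed set $W$ consists entirely of malicious workers (since initial reputations are equal and ties are broken at random), then contradict the definition of eventual correctness by observing that in any round after $r_0$ the master fails to audit with probability $1-p_\VRF^{min}>0$ and therefore cannot obtain the correct result with probability $1$. Your treatment is in fact slightly more careful than the paper's, in that you explicitly handle the $R=\emptyset$ case and flag the standing assumption $p_\VRF^{min}<1$.
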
 \vspace*{-2mm}
\remove{
\begin{proof}
For the sake of contradiction, assume that the master does not change workers over rounds and eventual correctness is achieved. That is, there is a round $r_0$ such that for all rounds $r \geq r_0$ the master uses $p_\VRF=p_\VRF^{min}<1$ and obtains the correct answer with probability 1, even though the master never changes workers. Let $W$ be the subset of n workers chosen by the master that will never change. Given that the type of each worker is unknown, that workers are chosen uniformly at random, and that there are at least n malicious workers, there is a probability $p>0$ that the master chooses only malicious workers. Consider round $r_0$. In round $r_0$, there is a probability $1-p_\VRF>0$ that the master does not audit. Thus, the probability that the master obtains the correct answer in round $r_0$ is 
$1-p(1-p_\VRF)<1$, which is a contradiction.
\qed
\end{proof}
}
\added[ev]{The intuition behind this observation is that there is always a positive probability that the master will select $n$ malicious workers at the first round and will have to remain with the same workers.}
This observation justifies that the master has to change its choice of workers if eventual correctness has to be guaranteed.  
We apply the natural approach of choosing the $n$ workers with the largest reputation among the $N$ workers in the pool (breaking ties randomly). 
In order to guarantee eventual correctness we need to add one more condition regarding the availability of the workers. 
\vspace*{-.8em}
\begin{observation}
\label{obs2}
\replaced[ev]{To guarantee eventual correctness at least one non-malicious worker $i$ must exist with $d_i=1$.}{To satisfy eventual correctness at least one worker $i$ that is not malicious must have $d_i=1$.}
\end{observation}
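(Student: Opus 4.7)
The plan is to prove Observation~\ref{obs2} by contradiction: assume eventual correctness is guaranteed even though every non-malicious worker $i$ has $d_i<1$, and derive that in some late round the master fails to obtain the correct answer with positive probability. The key observation is that when no correct reply arrives and the master does not audit, the master either has no answer at all or accepts the incorrect value produced by the (colluding) responding malicious workers, so in either case it does not return the correct result.

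First, I would invoke the definition of eventual correctness to fix a round $r_0$ such that for every $r\geq r_0$ the master audits with probability exactly $p_\VRF^{min}<1$ and obtains the correct result with probability $1$. Pick any such round $r$ and let $W^r$ be the chosen set of $n$ workers. I would then split the analysis into two cases depending on whether $W^r$ contains any non-malicious worker. In both cases I will lower bound the probability of the ``no correct reply'' event by a strictly positive quantity.

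In Case 1, $W^r$ consists entirely of malicious workers. Then no correct reply can ever be received in round $r$, so the master returns the correct answer only if it audits, which occurs with probability $p_\VRF^{min}<1$. In Case 2, $W^r$ contains a non-empty subset $H\subseteq W^r$ of non-malicious workers. By hypothesis $d_i<1$ for every $i\in H$, and availabilities are independent across workers, so the probability that \emph{no} worker in $H$ replies is $\prod_{i\in H}(1-d_i)>0$. Conditioned on that event, all replies (if any) come from malicious workers in $W^r\setminus H$, which by the model assumption all return the same incorrect value; hence if the master does not audit, which happens independently with probability $1-p_\VRF^{min}>0$, the master accepts an incorrect value (or has no reply to accept). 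Combining, the probability of not obtaining the correct answer in round $r$ is at least $(1-p_\VRF^{min})\prod_{i\in H}(1-d_i)>0$.

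Either case contradicts the assumption that the master obtains the correct answer with probability $1$ in round $r\geq r_0$, so the supposition fails and at least one non-malicious worker must have $d_i=1$. The only subtlety I anticipate is making sure the argument accounts for rational workers in $H$ who might happen to cheat: this is actually irrelevant, since the event I condition on is that \emph{none} of the non-malicious workers responds, at which point the behaviour of rational workers plays no role. The independence of worker availabilities, stated in the model, is what makes the product bound legitimate and is the only probabilistic ingredient needed.
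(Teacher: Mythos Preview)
Your proposal is correct and follows essentially the same contradiction argument as the paper: assume no non-malicious worker has $d_i=1$, invoke eventual correctness to reach a round where $p_\VRF=p_\VRF^{min}<1$, and observe that with positive probability the master neither audits nor receives any correct reply. The paper's proof is terser, simply asserting that ``there is a positive probability that the master does not audit and all the replies received (if any) are incorrect,'' whereas you make this explicit via the case split and the product bound $\prod_{i\in H}(1-d_i)$; this added detail is sound but not a different route.
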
 \vspace*{-2mm}
\remove{
\begin{proof}
For the sake of contradiction assume that every non-malicious worker $i$ has $d_1<1$ and eventual correctness is satisfied. Then, by definition
of eventual correctness, there is a round $r_0$ such that in all rounds $r \geq r_0$ the master uses $p_\VRF=p_\VRF^{min}<1$.
In any round $r \geq r_0$ there is a positive probability that the master does not audit and all the replies received (if any) are incorrect. Then, there is a positive probability that the master does not obtain the correct task result, which is a contradiction.
\qed
\end{proof}
}

To complement the above observations, we show now that there are sets of workers with which eventual correctness is achievable using the different reputation types (\typeA and \typeB as truthfulness reputations) defined and the master reputation-based mechanism in Algorithm~\ref{alg4}.
\vspace*{-.6em}

\begin{theorem}
Consider a system in which \replaced[ev]{workers are either altruistic or malicious}{there are no rational workers} and there is at least one altruistic worker $i$ with $d_i=1$ in the pool. Eventual correctness is satisfied if the mechanism of Algorithm~\ref{alg4} is used with the responsiveness reputation and any of the truthfulness reputations \typeA or \typeB. 
\label{theorem_t1_t2}
\end{theorem}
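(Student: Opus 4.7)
Let $a^*$ denote the altruistic worker with $d_{a^*}=1$ guaranteed by the hypothesis. The first step is to show that $\rho_{a^*}(r)=1$ for every round $r$ and deduce that $a^*$ is always tied for top-$n$ in the selection rule. Since $a^*$ always replies when selected, $reply\_select_{a^*}(r)=select_{a^*}(r)$, hence $\rho_{rs_{a^*}}(r)=1$; since $a^*$ is altruistic it is always truthful when audited, so $correct\_audit_{a^*}(r)=audit\_reply\_select_{a^*}(r)$, which gives $\rho_{tr_{a^*}}(r)=1$ under both \typeA (numerator equals denominator) and \typeB (the exponent of $\varepsilon$ is zero). Because every $\rho_i\in[0,1]$, $a^*$ attains the global maximum reputation in every round, so it is always in the top-$n$ tie-set; in particular it is selected with positive probability each round and, being always available, then lies in $R$.

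Next I would show that after almost surely finitely many rounds the aggregate truthfulness reputation of the malicious workers appearing in $W^r$ is arbitrarily small. A malicious worker $m$ starts at reputation $1$ and ties with $a^*$, so by the random tie-breaking it is selected with positive probability while still untested; conditional on being selected and available, it is audited with probability at least $p_\VRF^{min}>0$ and caught, which shrinks $\rho_{tr_m}$ to $1/(k+1)$ after $k$ catches for \typeA and to $\varepsilon^k$ for \typeB. A Borel--Cantelli argument then shows that either $m$ is selected infinitely often, in which case $\rho_{tr_m}\to 0$, or $m$ is eventually displaced from the top $n$ by workers of strictly higher reputation. Iterating across finitely many ``phases'' indexed by the descending value of the current top reputation, one obtains: for every $\delta>0$ there is an a.s.\ finite round $r_\delta$ beyond which every malicious $m\in W^r$ satisfies $\rho_{tr_m}\le\delta$.

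The final step combines these. Fix $\delta<\min\{1/(n-1),\,\tau/(n-1)\}$. For $r\ge r_\delta$, we have $a^*\in R$ with $\rho_{tr_{a^*}}=1$, while the total weight of cheaters in $R$ is at most $(n-1)\delta<1$; hence the weighted majority returns the correct answer whenever the master does not audit. When the master does audit, $\rho_{tr_F}/\rho_{tr_R}\le (n-1)\delta<\tau$, so the update formula strictly decreases $p_\VRF$ by at least the positive constant $\am(\tau-(n-1)\delta)$. Because $p_\VRF\ge p_\VRF^{min}>0$ throughout, audits occur infinitely often, and after finitely many the clamp to $[p_\VRF^{min},1]$ pins $p_\VRF=p_\VRF^{min}$ from then on; every subsequent round thus yields the correct result with the minimal audit probability, establishing eventual correctness.

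\textbf{Main obstacle.} The hardest step is the second one: selection, availability, and auditing are coupled random events, and an individual malicious worker may drop out of the top $n$ after a single catch and never be audited again, so a naive worker-by-worker claim that $\rho_{tr_m}\to 0$ is false. The argument must therefore control the \emph{aggregate} cheating weight in $W^r$, partitioning rounds into finitely many phases delimited by the top-reputation level and using Borel--Cantelli carefully in each phase. Tie-breaking in the very first rounds, where most workers share reputation $1$, also needs delicate handling to make sure the inductive phase structure actually gets started.
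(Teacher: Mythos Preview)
Your overall structure matches the paper's proof: (i) the altruistic fully-available worker keeps $\rho=1$ forever, (ii) every malicious worker that keeps appearing in $W^r$ has its truthfulness reputation driven below any threshold, and (iii) hence the honest side eventually dominates the weighted majority. The paper, however, dispatches your ``main obstacle'' far more cheaply than you suggest. It simply case-splits each malicious worker $k$ into \emph{selected finitely often} (then there is a round after which $k\notin W^r$, and its residual reputation is irrelevant) versus \emph{selected infinitely often} (then, since $d_k>0$ and $p_\VRF\ge p_\VRF^{min}>0$, $k$ is audited infinitely often and $\rho_{tr_k}$ drops below $1/n$). Taking the maximum of the finitely many witness rounds gives a single $R$ after which every malicious worker present in $W^r$ has $\rho_{tr_k}<1/n$, so the aggregate cheating weight is below $1$ while an altruistic $d{=}1$ worker contributes weight $1$. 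No phase decomposition or careful Borel--Cantelli bookkeeping across reputation levels is needed; your worry that ``a naive worker-by-worker claim that $\rho_{tr_m}\to 0$ is false'' is correct, but the paper never makes that claim---it only needs low reputation \emph{conditional on still being selected}.

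Two remarks on your write-up. First, in your third step you assert $a^*\in R$ for all $r\ge r_\delta$; strictly speaking, if several altruistic workers have $d_j=1$ they all sit at reputation $1$ and tie-breaking may pick a different one, so the clean statement (which the paper uses) is that \emph{some} altruistic worker with $d=1$ and $\rho=1$ is in $W^r$, hence in $R$. Second, you go beyond the paper by explicitly arguing that $p_\VRF$ is driven down to $p_\VRF^{min}$ via the update rule (using $\rho_{tr_F}/\rho_{tr_R}<\tau$); the paper's proof stops at ``weighted majority is always correct'' and leaves this last clause of eventual correctness implicit, so your addition is a genuine improvement in completeness.
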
 \vspace*{-2mm}

\remove{
\begin{proof}
First, observe that the responsiveness reputation of worker $i$ will always be $\rho_{rs_i}=1$, since $d_i=1$. In fact, all workers $j$ with $d_j=1$ will have responsiveness reputation $\rho_{rs_j}=1$ forever. Moreover, for any worker $k$ with $d_k<1$ that is selected by the master an infinite number of rounds, with probability 1 there is a round $r_k$ in which $k$ is selected but the master does not receive its reply. Hence, $\rho_{rs_k}(r)<1$ for all $r>r_k$.

Let us now consider truthfulness reputation (of types \typeA and \typeB). All altruistic workers $j$ (including $i$) have truthfulness reputation $\rho_{tr_j}=1$ forever, since the replies that the master receives from them are always correct. Malicious workers, on the other hand, fall in one of two cases. A malicious worker $k$ may be selected a finite number of rounds. Then, there is a round $r'_k$ after which it is never selected. If, conversely, malicious worker $k$ is selected an infinite number of rounds, since $d_k>0$ and $p_\VRF \geq p_\VRF^{min}>0$, its replies are audited an infinite number of rounds, and there is a round $r'_k$ so that $\rho_{tr_k}(r)<1/n$ for all $r>r'_k$.

Hence, there is a round $R$ such that, for all rounds $r >R$, (1) every malicious worker $k$ has $\rho_{tr_k}(r)<1/n$ or is never selected by the master, and (2) every worker $k$ with $d_k<1$ has $\rho_{rs_k}(r)<1$ or is never selected by the master. Since there is at least worker $i$ with reputation $\rho_i=1$, we have that among the $n$ workers in $W^r$, for all rounds $r >R$, there is at least one altruistic worker $j$ with $d_j=1$ and $\rho_j=1$, and the aggregate reputation of all malicious workers is less than 1. Hence,
the master always gets correct responses from a weighed majority of workers. This proves the claim.
\qed
\end{proof}
}

\added[ev]{The intuition behind the proof is that thanks to the decremental way in which the reputation of a malicious worker is calculated at some point the altruistic worker $i$ with full responsiveness ($d_i=1$) will be selected and have a greater reputation than the aggregated reputation of the selected malicious workers.}
A similar result does not hold if truthfulness reputation of type \typeD is used. In this case, we have found that it is not enough that one altruistic worker with 
full availability 
exists, but also the number of altruistic workers with partial availability 
have to be considered.
\vspace*{-.8em}
\begin{theorem}
Consider a system in which \replaced[ev]{workers are either altruistic or malicious}{ there are  no rational workers} and there is at least one altruistic worker $i$ with $d_i=1$ in the pool. 
In this system, the mechanism of Algorithm~\ref{alg4} is used with the responsiveness reputation and the truthfulness reputation \typeD. 
Then, eventual correctness is satisfied if and only if the number of altruistic workers with $d_j <1$ is smaller than $n$.
\label{theorem_t4}
\end{theorem}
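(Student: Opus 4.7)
I would prove the two directions separately. Throughout, let $A_{=1}$ and $A_{<1}$ denote the altruistic workers with $d=1$ and $d<1$ respectively, and let $B(r)$ be the set of workers whose streak has reached $10$ by round $r$. Because a malicious worker is caught whenever audited, its streak is reset on every audit, so $B(r)$ contains only altruistic workers; once a worker enters $B$, its streak is non-decreasing and its truthfulness reputation stays in $[0.9,1)$.

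For sufficiency, assume $|A_{<1}|<n$ and fix any $i^*\in A_{=1}$. As long as $A_{=1}\cap B=\emptyset$ one has $|B|\le|A_{<1}|<n$, so at least one slot of $W^r$ is drawn uniformly at random from the reputation-$0$ pool; then $i^*$ is selected with probability $\ge 1/N$, is available since $d_{i^*}=1$, and is audited with probability $\ge p_\VRF^{min}$. A conditional Borel--Cantelli argument forces $streak_{i^*}\ge 10$ in finite time with probability $1$. Once $i^*\in B$, $\rho_{i^*}=1\cdot(1-1/streak_{i^*})\to 1$ dominates both the limit $\rho_j\to d_j<1$ of every $j\in A_{<1}\cap B$ and the value $0$ of every malicious or streak-$<10$ worker; hence from some round on $W^r$ always contains a worker of $A_{=1}\cap B$ that answers correctly. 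The weighted majority then selects the correct value (every cheater has $\rho_{tr}=0$), and since $\rho_{tr_F}/\rho_{tr_R}=0<\tau$ in that regime, the master's update pulls $p_\VRF$ monotonically down to $p_\VRF^{min}$.

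For necessity, assume $|A_{<1}|\ge n$ and fix any $n$-subset $S\subseteq A_{<1}$. Let $E$ be the event that in each of rounds $1$ through $10$ the master picks exactly $S$ (probability $1/\binom{N}{n}$ per round, since every worker has reputation $0$ and ties are broken uniformly), audits (probability $\ge p_\VRF^{min}$), and receives a reply from every $j\in S$ (probability $\prod_{j\in S} d_j$). These ingredients are independent across rounds, so $\Pr[E]\ge\bigl[(1/\binom{N}{n})\cdot p_\VRF^{min}\cdot\prod_{j\in S}d_j\bigr]^{10}>0$. On $E$, after round $10$ every $j\in S$ has $streak_j=10$ and $\rho_j>0$, while every worker outside $S$ (including the privileged worker $i$) still has $streak=0$ and $\rho=0$. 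The state is absorbing because $S$ is the unique $n$-set of positive-reputation workers, so $W^r=S$ forever after; workers outside $S$ are never selected again and streaks inside $S$ are non-decreasing. But then in each later round the probability that $R=\emptyset$ equals $\prod_{j\in S}(1-d_j)>0$, independent across rounds, so by Borel--Cantelli $R=\emptyset$ infinitely often, violating eventual correctness by the observation that follows its definition.

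The main obstacle is the necessity direction, where one must simultaneously exhibit a positive-probability trajectory into the bad state and verify that the state is truly absorbing. What makes this trap possible is precisely the threshold shape of \typeD: because every streak-$<10$ worker is lumped at reputation $0$, the $n$ workers in $S$ can cross the $0.9$ threshold in lock-step while the altruistic worker $i$ with $d_i=1$ is never selected at all, a scenario that the continuously varying \typeA and \typeB reputations of Theorem~\ref{theorem_t1_t2} automatically rule out.
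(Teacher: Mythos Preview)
Your overall structure matches the paper's, and your necessity direction is in fact more carefully quantified than the paper's own (you exhibit an explicit ten-round event and verify absorption). However, the sufficiency direction has a real gap.

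The sentence ``$\rho_{i^*}=1\cdot(1-1/streak_{i^*})\to 1$ dominates the limit $\rho_j\to d_j<1$ of every $j\in A_{<1}\cap B$'' is circular: for $streak_{i^*}\to\infty$ you need $i^*$ to be selected and audited infinitely often after entering $B$, which is exactly what you are trying to prove. Nothing you have written excludes the scenario in which, at the moment $i^*$ crosses the threshold, there are already $n$ workers in $B$ with currently larger reputation (e.g.\ members of $A_{<1}$ that entered $B$ earlier, have longer streaks, and have not yet missed a reply), so that $i^*$ drops out of $W^r$ and its streak freezes at~$10$. The limits you quote would then never be attained.

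The missing observation, which the paper uses and which makes the limit argument unnecessary, is that $|B(r)|\le n$ for every~$r$. Inductively: if $|B|=k<n$ the master selects all of $B$ plus $n-k$ reputation-zero workers, so after the round at most $n$ workers can have positive truthfulness reputation; once $|B|=n$, the master selects exactly $B$ forever and no outside worker is ever selected again. Given $|B|\le n$, as soon as $i^*\in B$ it is automatically among the top~$n$ and is selected in every subsequent round---no asymptotic comparison is needed. Equivalently (and this is closer to how the paper phrases it), since $|A_{<1}\cap B|\le |A_{<1}|<n$, whenever $A_{=1}\cap B\neq\emptyset$ the top~$n$ by reputation must contain at least one member of $A_{=1}\cap B$. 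With this in place, the rest of your sufficiency argument (weighted majority correct because every cheater has $\rho_{tr}=0$, and $p_\VRF\downarrow p_\VRF^{min}$ since $\rho_{tr_F}/\rho_{tr_R}=0$) goes through.
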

\begin{proof}
\vspace*{-2mm}
In this system, it holds that every malicious worker $k$ has truthfulness reputation $\rho_{tr_k}=0$ forever, since the replies that the master receives from it (if any) are always incorrect. Initially, altruistic workers also have zero truthfulness reputation. An altruistic worker $j$ has positive truthfulness reputation after it is selected, and its reply is received and audited by the master 10 times. Observe that, once that happens, the truthfulness reputation of worker $j$ never becomes 0 again. Also note that the reponsiveness reputation never becomes 0. Hence, the first altruistic workers that succeed in raising their truthfulness reputation above zero are always chosen in future rounds. While there are less than $n$ workers with positive reputation, the master selects at random from the zero-reputation workers in every round. Then, eventually (in round $r_0$) there are $n$ altruistic workers with positive reputation, or there are less than $n$ but all altruistic workers are in that set. After then, no new altruistic worker increase its reputation (in fact, is ever selected), and the set of altruistic selected workers is always the same.

If the number of altruistic workers with $d_j <1$ is smaller than $n$, since worker $i$ has $d_i=1$, after round $r_0$ among the selected workers there are altruistic workers with $d_j=1$ and positive reputation. Then, in every round there is going to be a weighted majority of correct replies, and eventual correctness is guaranteed.

If, on the other hand, the number of altruistic workers with $d_j <1$ is at least $n$, there is a positive probability that all the $n$ workers with positive reputation are from this set. \added[ev]{Since there is a positive probability that $n$ altruistic workers with  $d_j <1$ are selected in round $r_0$ with probability one the worker $i$ with $d_i=1$ will never be selected.} 
 If this is the case, eventual correctness is not satisfied \added[ev]{(since there is a positive probability that the master will not receive a reply in a round)}. Assume otherwise and consider that after round $r'_0$ it holds that $p_\VRF = p_\VRF^{min}$. Then, in every round after $r'_0$ there is a positive probability that the master receives no reply from the selected workers and it does not audit, which implies that it does not obtain the correct result.
\qed
\end{proof}
\vspace*{-.5em}
This result is rather paradoxical, since it implies that a system in which all workers are altruistic (one with $d_i=1$ and the rest with $d_j<1$) does not guarantee eventual correctness, while a similar system in which the partially available workers are instead malicious does.
\added[ev]{This paradox comes to stress the importance of selecting the right truthfulness reputation. Theorem~\ref{theorem_t4} shows a positive correlation among a truthfulness reputation with the availability factor of a worker in the case a large number of altruistic workers.}
\vspace{-1.5em}


\newcommand{\myboldmath}{}
\newcommand{\defn}[1]           {{\textit{\textbf{\myboldmath #1}}}}
\newcommand{\defi}[1]           {{\textit{\textbf{\myboldmath #1\/}}}}

\section{Simulations}
\label{Ncn:simulation}
\vspace*{-5mm}
Theoretical analysis is complemented with illustrative simulations on a number of different scenarios for the case of full and partial availability. The simulated cases give indications on the values of some parameters (controlled by the master, namely the type of reputation and the initial $p_\VRF$) under which the mechanism performs better. The rest of the parameters of the mechanism and the scenarios presented are essentially based on the observations extracted from~\cite{einstein,emBoinc}, and are rather similar to our earlier work~\cite{opodis2013}. 
We have developed our own simulation setup by implementing our mechanism (Algorithms \ref{alg4} and \ref{alg2}, and the reputation types discussed above) using C++. The simulations were executed on a dual-core AMD Opteron 2.5GHz processor, with 2GB RAM, running CentOS version 5.3.

For simplicity,  we consider that all workers have the same aspiration level $a_i= 0.1$, although we have checked that with random values the results are similar to those presented here, provided their variance is not very large \added[ev]{($ a_i \pm 0.1$)}.
We consider the
same learning rate for the master and the workers, i.e., $\alpha = \am = \aw=0.1$.
Note that the
learning rate, as discussed for example in~\cite{RLbook} (called step-size there), is generally set to a small constant value for practical reasons. 
We set $\tau=0.5$ 
(c.f., Sect.~\ref{rep:mechanism}; also see~\cite{CCPE13}), 
$p_\VRF^{min}=0.01$, and $\varepsilon=0.5$ in reputation \typeB. We assume that the master does not punish the workers $\Cp=0$, since depending on the platform used this might not be feasible, and hence more generic results are considered. Also we consider that the cost of computing a task is $\Ct=0.1$ for all workers and, analogously, the master is rewarding the workers with $\SW=1$ when it accepts their result \added[ev]{(for simplicity no further correlation among these two values is assumed)}. The initial cheating probability used by rational workers is $p_{Ci}=0.5$ and the number of selected workers is set to $n=5$.    

The first batch of simulations consider the case when the workers are fully available (i.e, all workers have $d=1$), and the behavior of the mechanism under different pool sizes is studied. The second batch considers the case where the workers are partially available. \vspace{-.8em}

\paragraph{\bf \em Full Availability.}
Assuming full worker availability we attempt to identify the impact of the pool size on different metrics: (1) the number of rounds, (2) number of auditing rounds, and (3) number of incorrect results accepted by the master, all of them measured until the system reaches convergence (the first round in which $p_\VRF=p_\VRF^{min}$)\added[ev]{\footnote{As we have seen experimentally, first the system reaches a reliable state and then $p_\VRF=p_\VRF^{min}$.}}.
Additionally, we are able to compare the behavior of the three truthfulness reputation types, showing different trade-off among reliability and cost.

\begin{figure}[t!]
\begin{center}
$
\begin{array}{ccc}
\hspace{-10em}
\includegraphics[width=2.3in, trim = 1mm 0mm 1mm 2mm, clip]{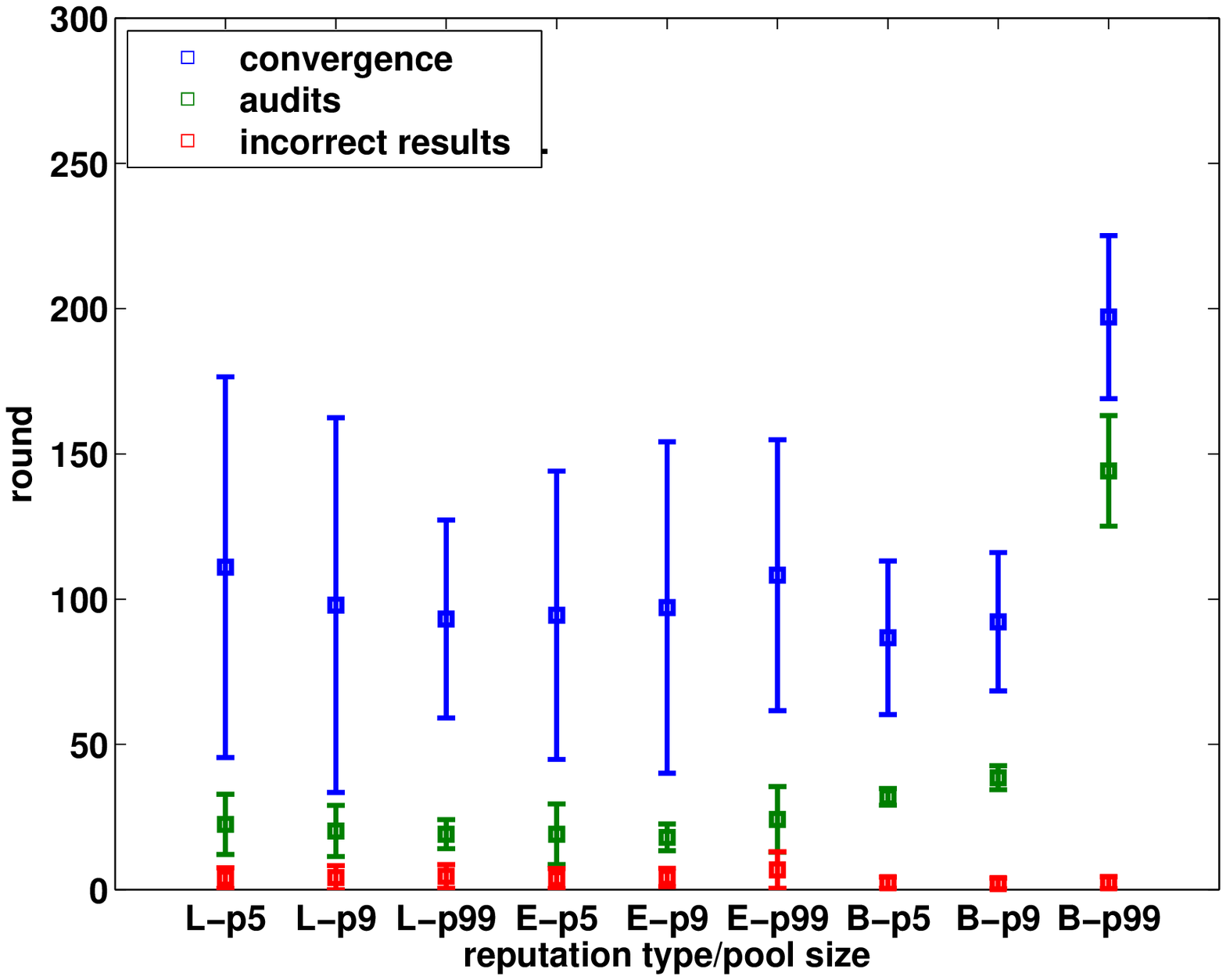}&
\includegraphics[width=2.3in, trim = 1mm 0mm 1mm 2mm, clip]{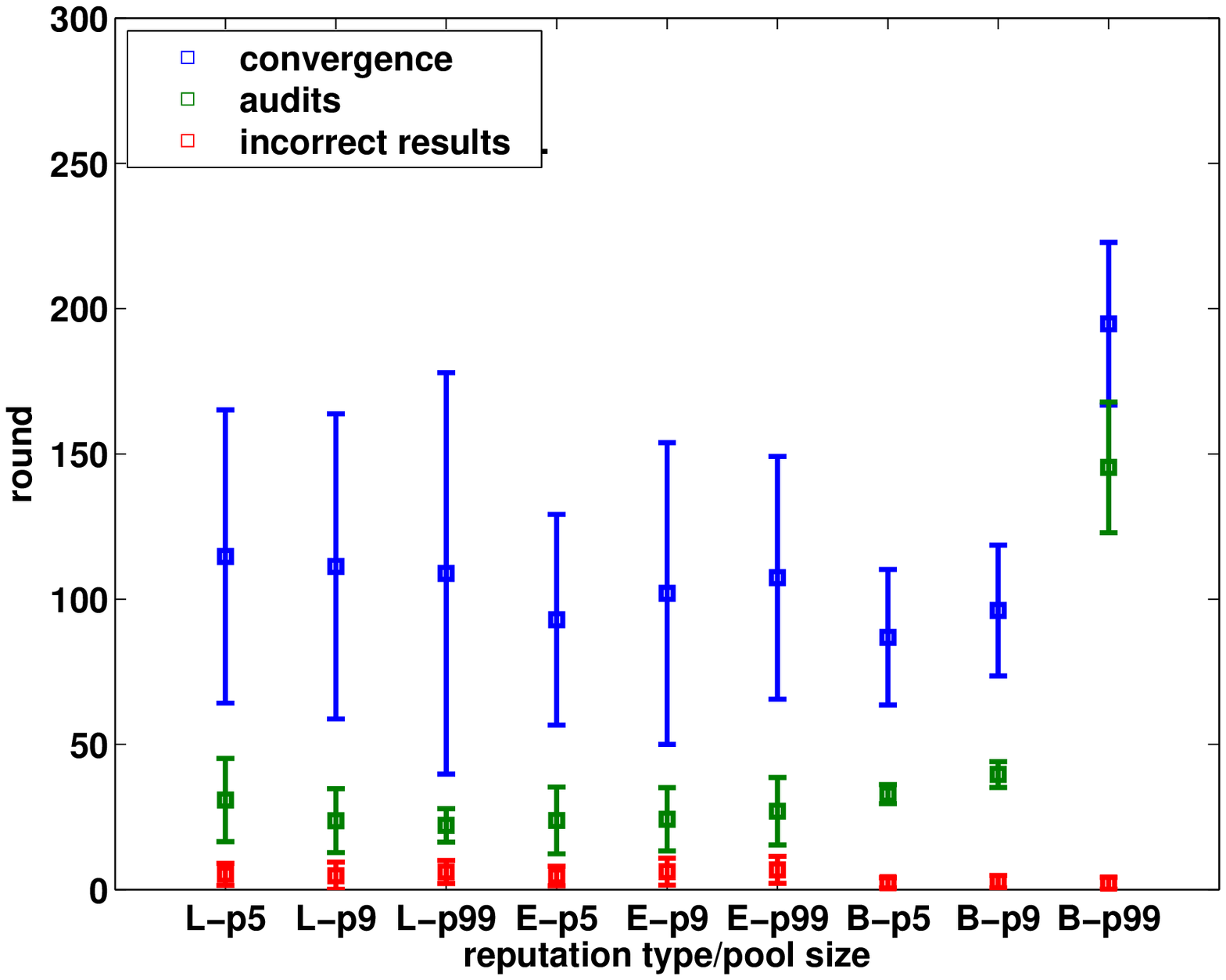}&
\includegraphics[width=2.3in, trim = 1mm 0mm 1mm 2mm, clip]{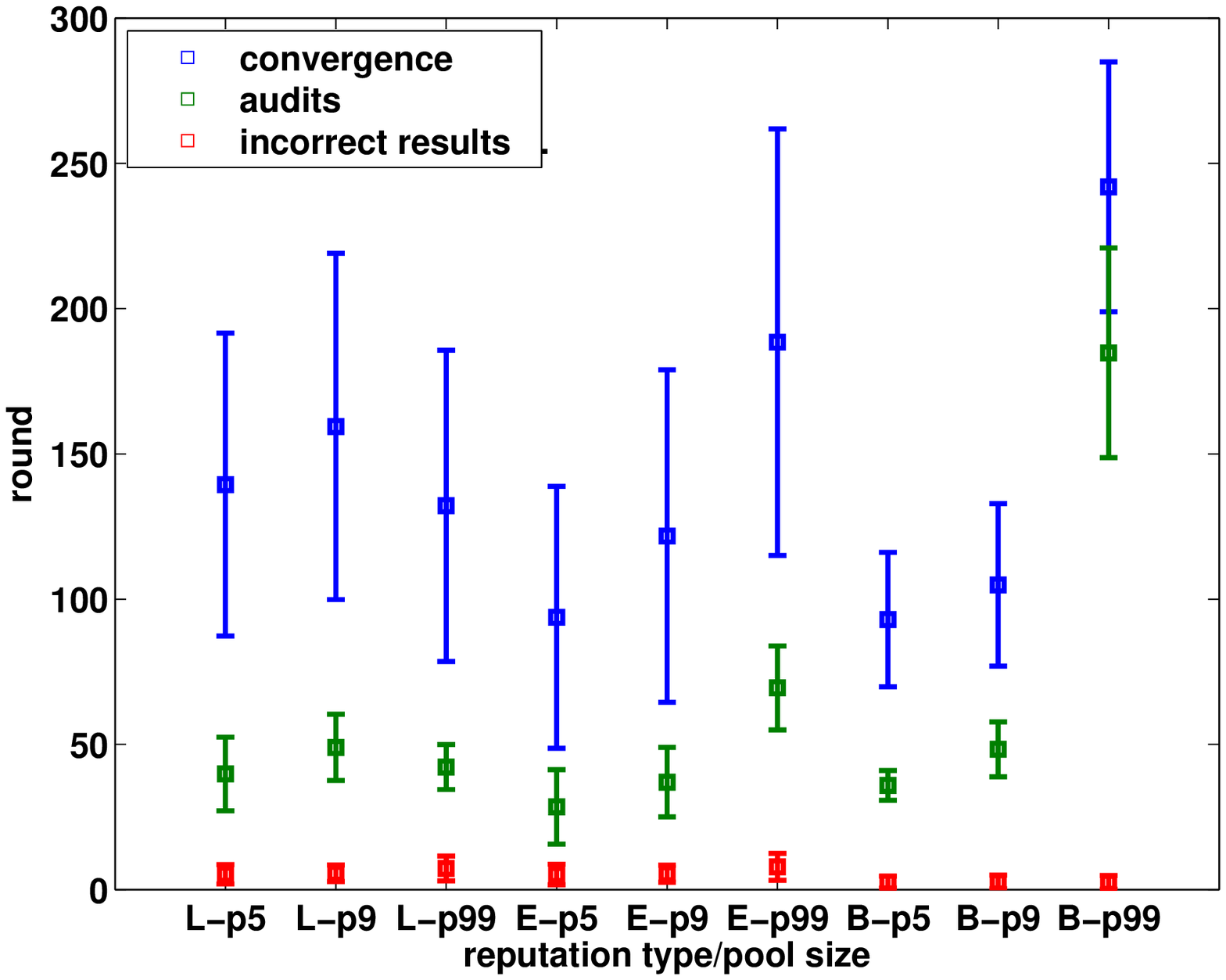}\\
\hspace{-8em}
(a1)&(b1)&(c1)\\
\hspace{-10em}
\includegraphics[width=2.3in, trim = 1mm 0mm 1mm 2mm, clip]{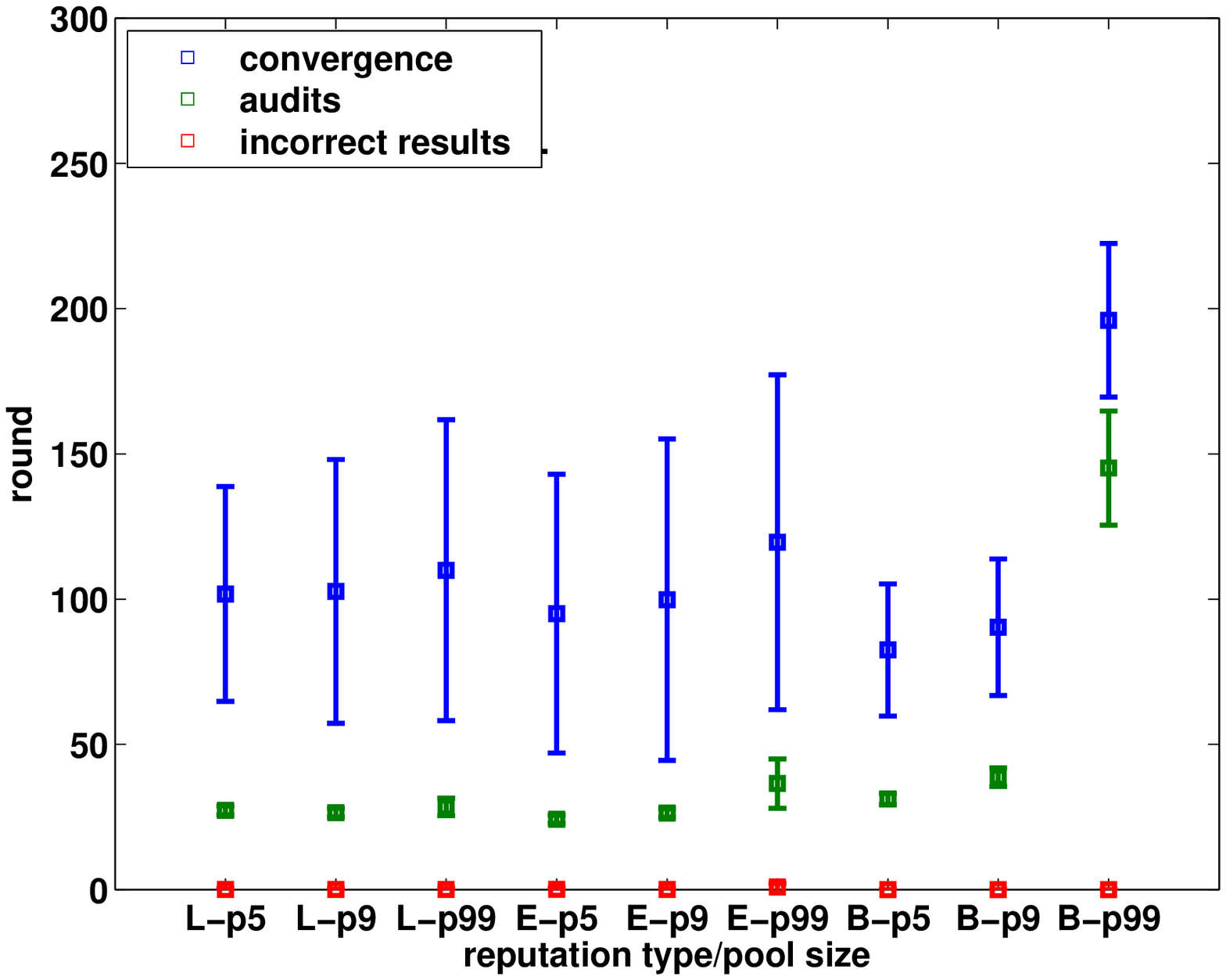}&
\includegraphics[width=2.3in, trim = 1mm 0mm 1mm 2mm, clip]{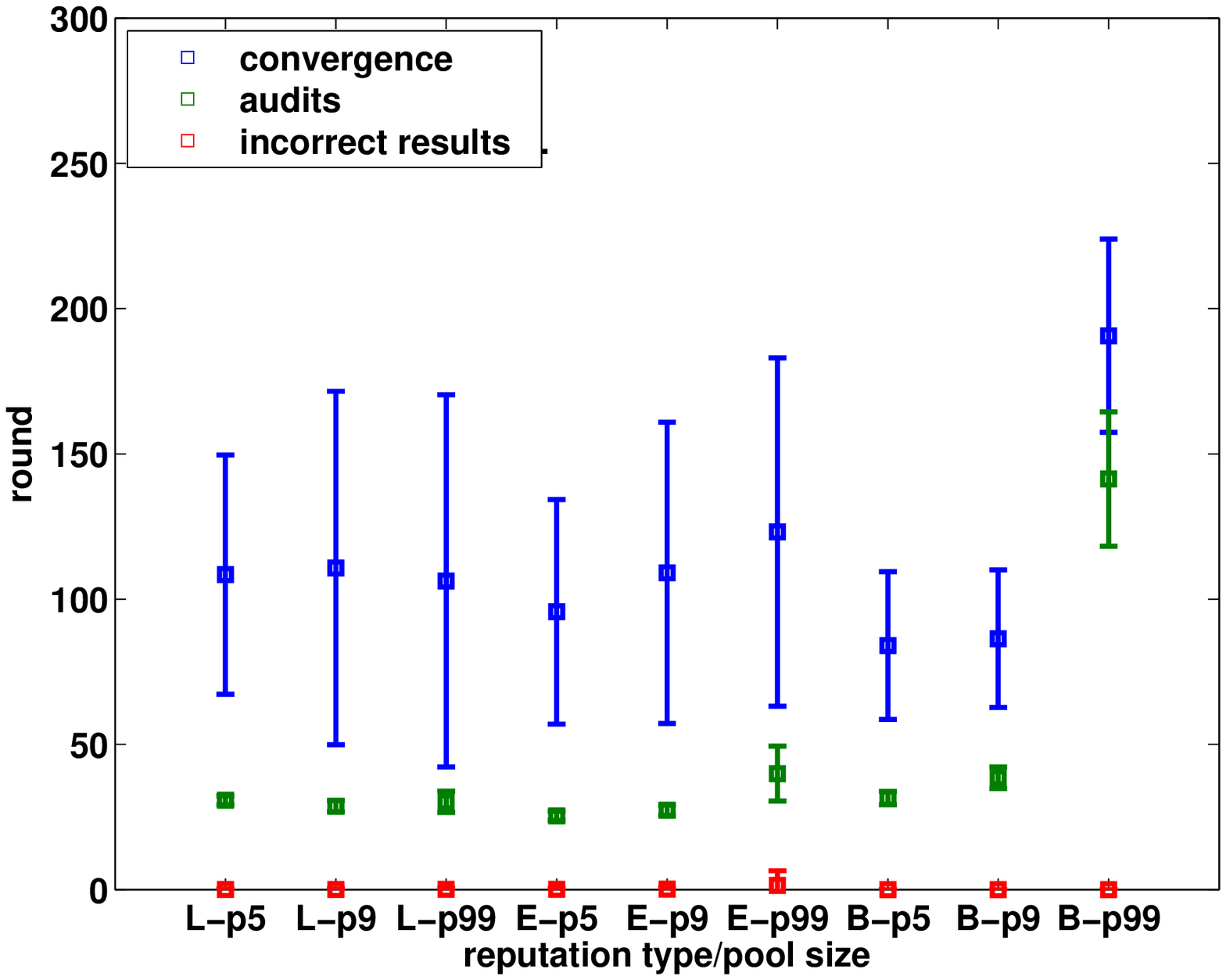}&
\includegraphics[width=2.3in, trim = 1mm 0mm 1mm 2mm, clip]{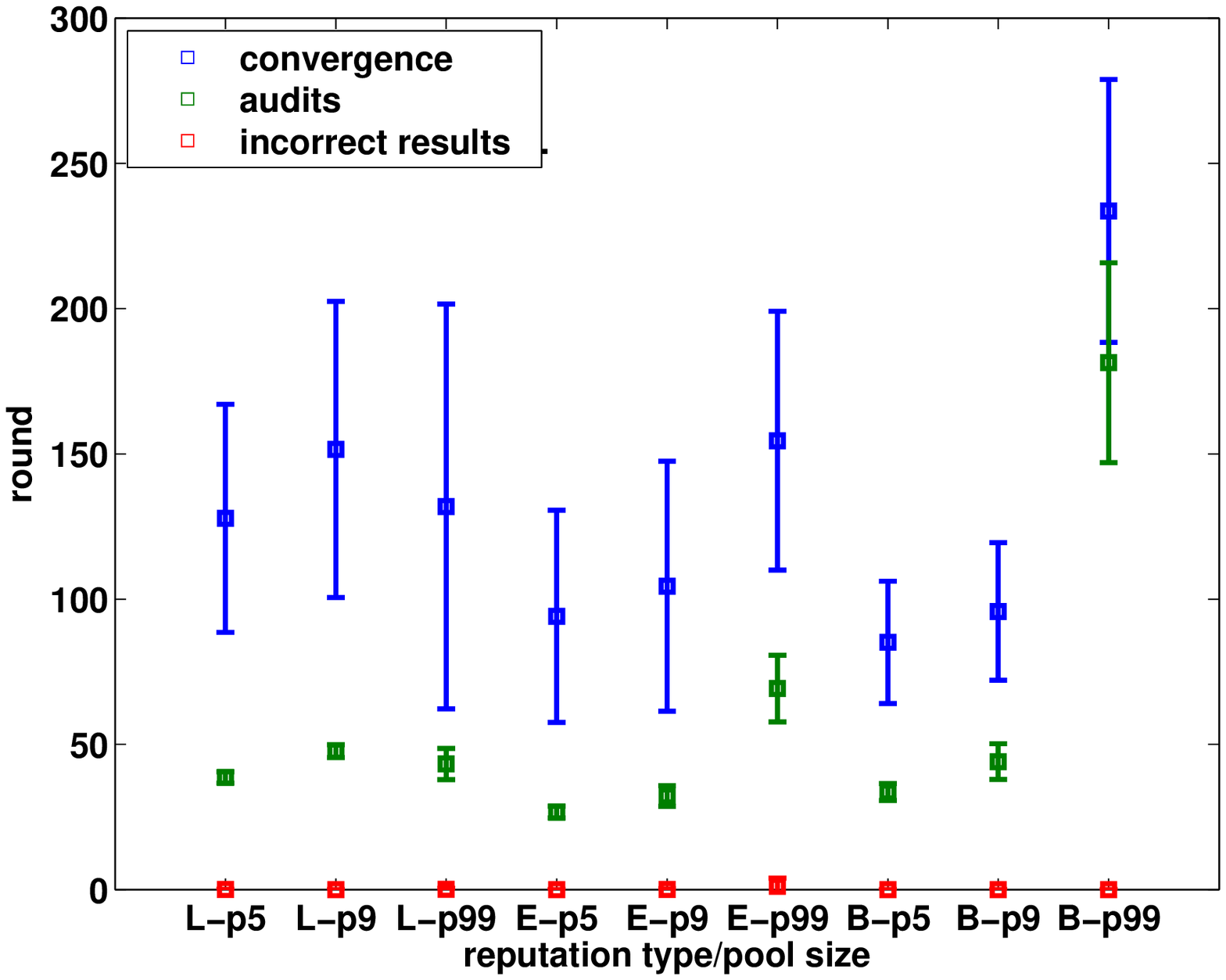}\\
\hspace{-8em}
(a2)&(b2)&(c2)\vspace{-1em}\\
\end{array}$
\caption{\small Simulation results with full availability. First row plots are for initial $p_\VRF=0.5$. Second row plots are for initial $p_\VRF=1$. The bottom (red) errorbars present the number of incorrect results accepted until convergence ($p_\VRF=p_\VRF^{min}$), the middle (green) errorbars present the number of audits until convergence; and finally the upper (blue) errorbars present the number of rounds until convergence, in 100 instantiations. In plots (a1) and (a2) the ratio of rational/malicious is 5/4. In plots (b1) and (b2) the ratio of rational/malicious is 4/5. In plots (c1) and (c2) the ratio of rational/malicious is 1/8. The x-axes symbols are as follows, L:~\typeA , E:~\typeB and B:~\typeD reputation; p5: pool size 5, p9: pool size 9 and p99: pool size 99.}
\label{rel-pa05}
\end{center}\vspace{-3.7em}
\end{figure}

We have tested the mechanism proposed in this paper with different initial $p_\VRF$ values. We present here two interesting cases of initial audit probability, $p_\VRF=0.5$
and $p_\VRF=1$. The first row of Figure~\ref{rel-pa05} (plots (a1) to (c1)) presents the results obtained in the 
simulations with initial $p_\VRF=0.5$ and the second row (plots (a2) to (c2)) the case
$p_\VRF=1$. The simulations in this section have been done for systems with only rational and malicious workers, with 3 different ratios between these worker types (ratios 5/4, 4/5, and 1/8), with different pool sizes ($N=\{5, 9, 99\}$), and for the 3 truthfulness reputation types. \added[ev]{These ratios consider the three most ``critical'' cases in which malicious workers can influence the results.}

A general conclusion we can extract from the first row of Figure~\ref{rel-pa05} (plots (a1) to (c1)) is that, independently of the ratio between malicious and rational workers, the trend that each reputation type follows for each of the different pool size scenarios is the same. (When the ratio of rational/malicious is 1/8 this trend is more noticeable.) 
Reputation \typeA does not show a correlation between the pool size and the evaluation metrics. This is somewhat surprising given that other two reputation types are
impacted by the pool size.
 
For reputation \typeB and \typeD we can observe that, as the pool size increases, the number of rounds until convergence also increases.
It seems like, for these reputation types, many workers from the pool have to be selected and audited before convergence. Hence, with a larger pool it takes more rounds for the mechanism to select and audit these workers, and hence to establish valid reputation for the workers and to reinforce the rational ones to be honest. 
For both reputation types (\typeB and \typeD) this is a costly procedure also in terms of auditing for all rational/malicious ratios. (The effect on the number of audits is more acute for reputation \typeD as the pool size increases.) As for the number of incorrect results accepted until convergence, with reputation \typeB they still increase with the
pool size. However, reputation \typeD is much more robust with respect to this metric, essentially guaranteeing that no incorrect result is accepted.

Comparing now the performance of the different reputation types based on our evaluation metrics, it seems that reputation \typeA performs better when the size of the pool is big compared to the other two reputation types. On the other hand reputation types \typeB and \typeD perform slightly better when the pool size is small. Comparing reputation types 
\typeB and \typeD, while reputation \typeD shows that has slightly faster convergence, this is traded for at least double auditing than reputation \typeB.
On the other hand, reputation \typeB is accepting a greater number of incorrect results until convergence. This is a clear example of the trade-off between convergence time, number of audits, and number of incorrect results accepted.      

Similar conclusions can be drawn when the master decides to audit with $p_\VRF=1$ initially, see Figure~\ref{rel-pa05}~(a2)~-~(c2). The only difference is that the variance, of the different instantiations on the three metrics is smaller. Hence, choosing $p_\VRF=1$ initially is a ``safer'' strategy for the master.\vspace{-.8em} 

\paragraph{\bf \em Partial Availability.}
Assuming now partial worker availability (i.e, workers may have $d<1$), we attempt to identify the impact of the unavailability of a worker on four different metrics: (1) the number of rounds, (2) number of auditing rounds, and (3) number of incorrect results accepted by the master, all until the system reaches convergence. In addition, we obtain (4) the number of incorrect results accepted by the master \emph{after} the system reaches convergence (which was zero in the previous section). Moreover, we are able to identify how suitable each reputation is, under different workers' ratio and unavailability probabilities. 

We keep the pool size fixed to $N=9$, and the number of selected workers fixed to $n=5$; and we analyze the behavior of the system in a number of different scenarios where the workers types and availabilities vary. The depicted scenarios present the cases of initial audit probability: $p_\VRF=\{0.5, 1\}$. 

\begin{figure}[t!]
\centering
$
\begin{array}{cc}
\hspace{-2em}
\includegraphics[width=2.5in, trim = 1.1mm 0mm 1mm 2mm, clip]{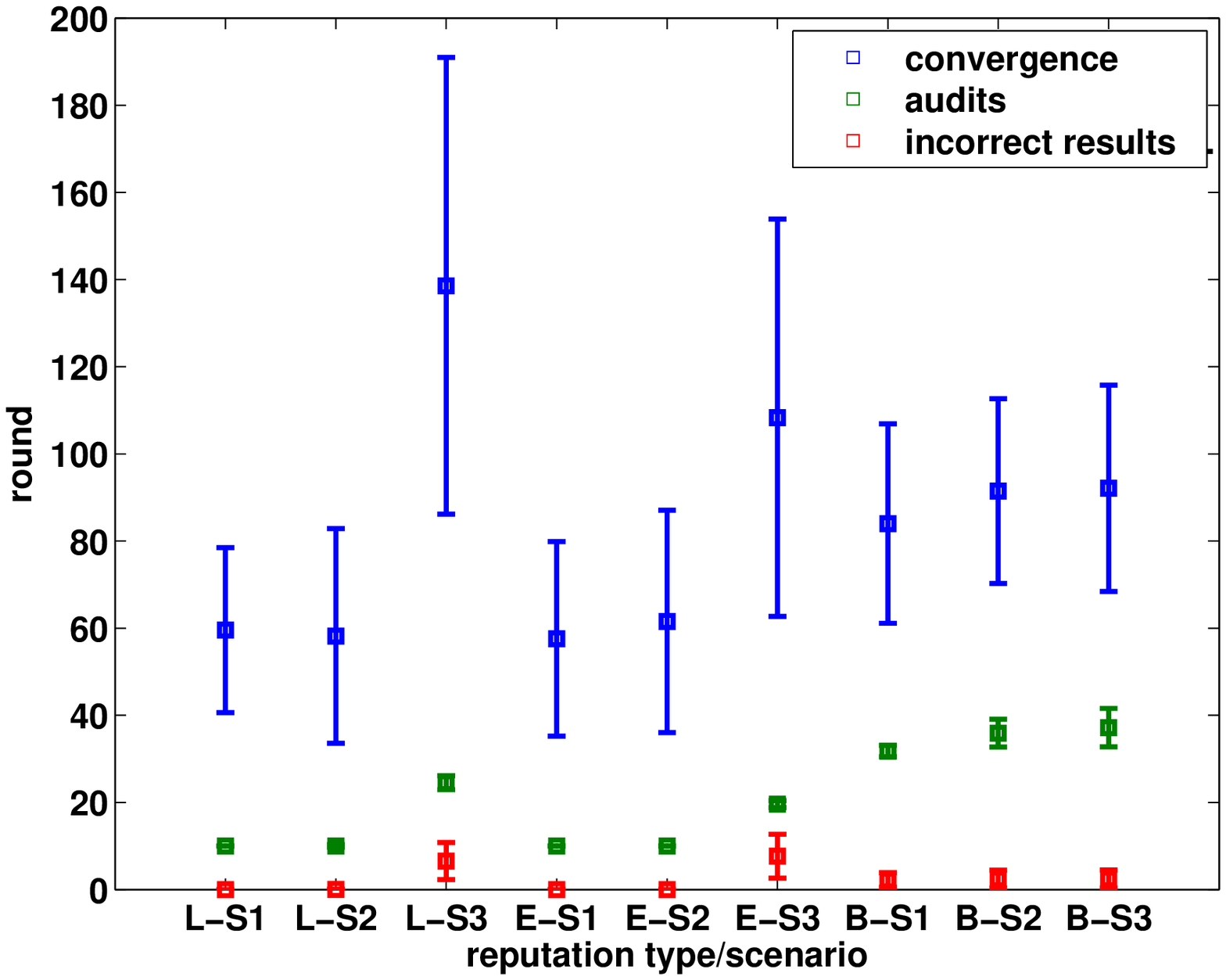}&
\includegraphics[width=2.5in, trim = 1.1mm 0mm 1mm 2mm, clip]{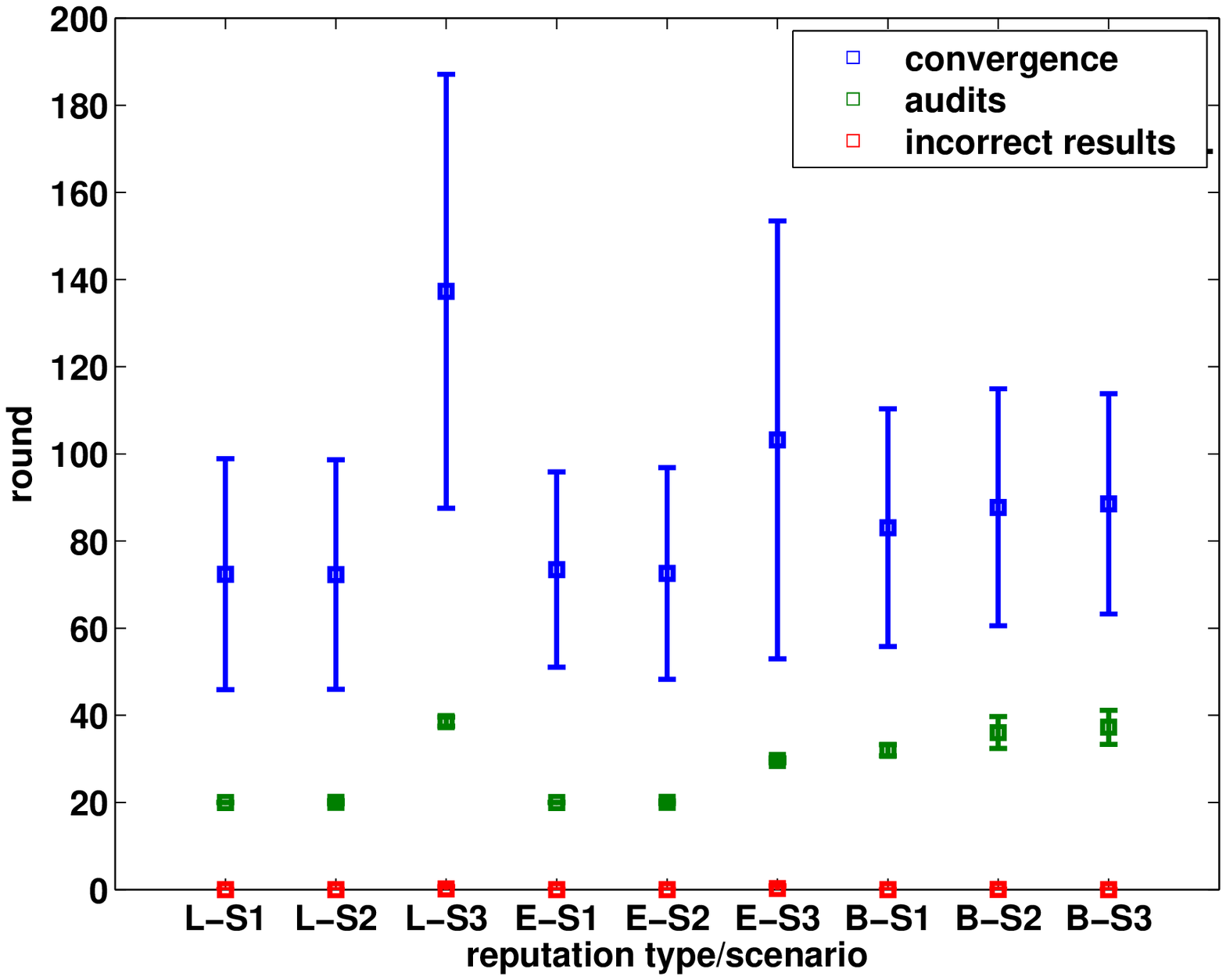}\\
(a1)&(b1)\\
\hspace{-2em}
\includegraphics[width=2.5in, trim = 1.1mm 0mm 1mm 2mm, clip]{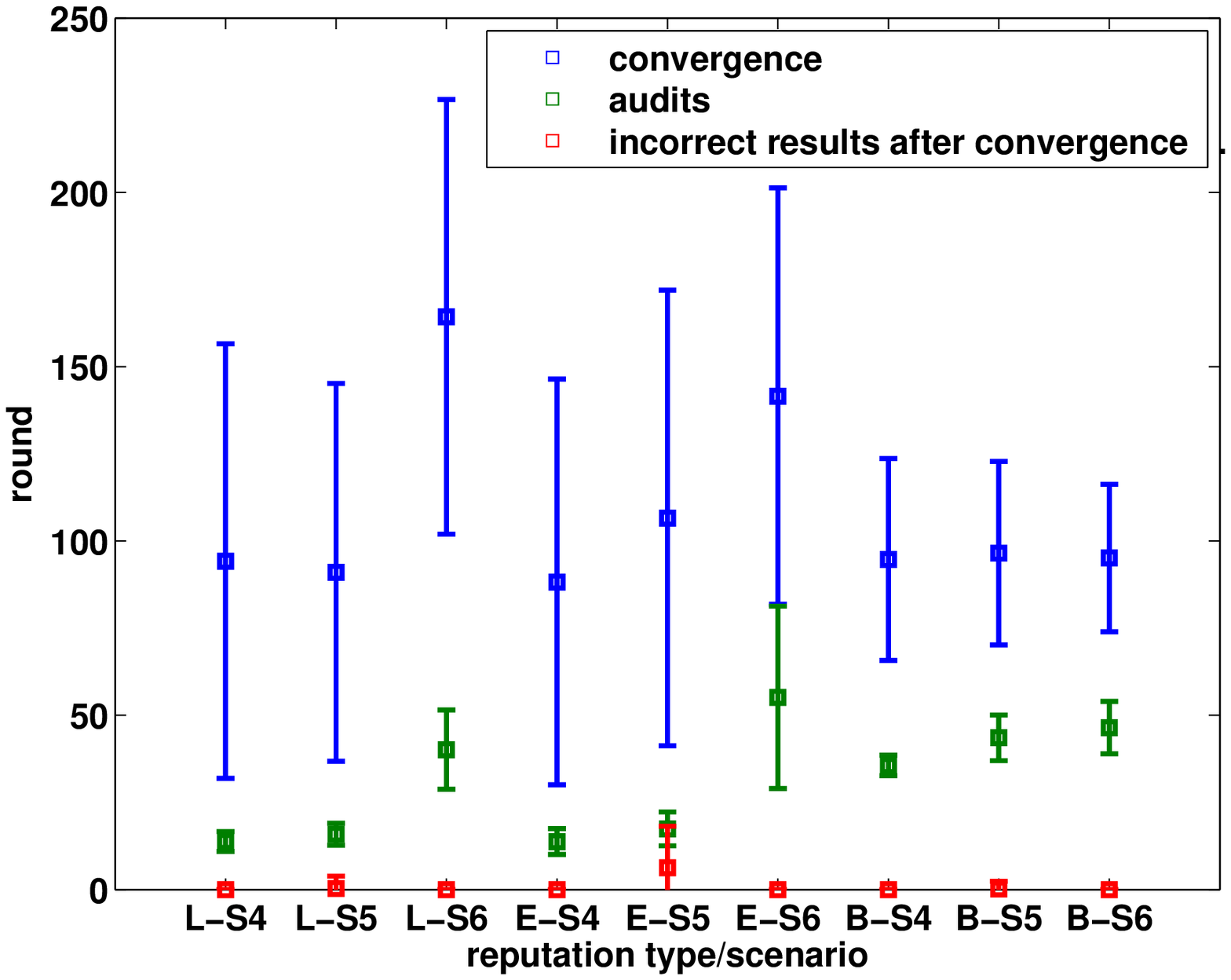}&
\includegraphics[width=2.5in, trim = 1.1mm 0mm 1mm 2mm, clip]{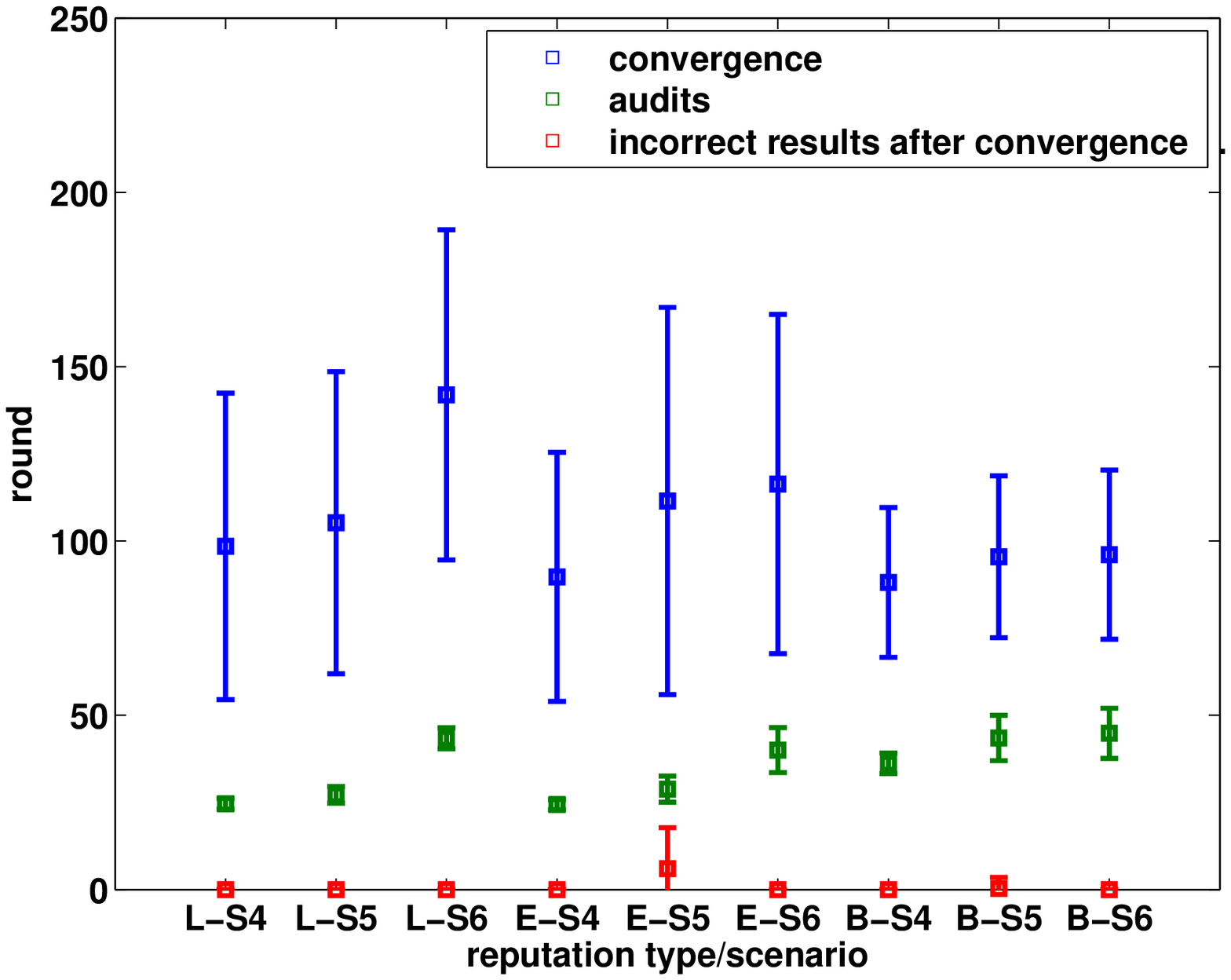}\\
(a2)&(b2)\vspace{-1em}\\
\end{array}$
\caption{\small Simulation results with partial availability: (a1)-(a2) initial $p_\VRF=0.5$, (b1)-(b2) initial $p_\VRF=1$ . For (a1)-(b1) The bottom (red) errorbars present the number of incorrect results accepted until convergence ($p_\VRF=p_\VRF^{min}$). For (a2)-(b2) the bottom (red) errorbars present the number of incorrect results accepted after convergence. For all plots, the middle (green) errorbars present the number of audits until convergence; and finally the upper (blue) errorbars present the number of rounds until convergence, in 100 instantiations. The x-axes symbols are as follows, L: reputation \typeA, E: reputation \typeB , B: reputation~\typeD , S1: 9 altruistic workers with $d=1$, S2: 1 altruistic with $d=1$ and 8 altruistic workers with $d=0.5$, S3: 1 altruistic with $d=1$ and 8 malicious workers with $d=0.5$, S4: 9 rational workers with $d=1$, S5: 1 rational with $d=1$ and 8 rational workers with $d=0.5$, S6: 1 rational with $d=1$ and 8 malicious workers with $d=0.5$.}
\label{unrel-alt-allpa}\vspace{-3em}
\end{figure}

Figure~\ref{unrel-alt-allpa} (a1)-(b1) compares a base case where all workers are altruistic with $d=1$ (scenario S1) with scenarios where 1 altruistic worker exists with $d=1$ and the rest of the workers are either altruistic (scenario S2) or malicious (scenario S3) with a partial availability $d=0.5$. Our base case S1 is the optimal scenario, and the mechanism should have the best performance with respect to metrics (1)-(3); this is confirmed by the simulations as we can observe. For scenario S2, where the 8 altruistic workers have $d=0.5$, reputations \typeA and \typeB are performing as good as the base case. While \typeD is performing slightly worse than the base case. Comparing the different reputation types for scenarios S1 and S2, it is clear that, for all metrics, \typeA and \typeB are performing better than \typeD. Moving on to scenario S3, where 8 malicious workers with $d=0.5$ exist, as expected, the mechanism is performing worse according to our reputation metrics. What is interesting to observe, though, is that reputation \typeD is performing much better than the other two reputation types. It is surprising to observe, for reputation \typeD, how close are the results for scenario S2 and especially scenario S3 to the base case S1. We believe that this is due to the nature of reputation \typeD, which keeps reputation to zero until a reliability threshold is achieved. From the observation of Figure~\ref{unrel-alt-allpa} (a1)-(b1), we can conclude that, if there is information on the existence of malicious workers in the computation, a ``safer'' approach would be the use of reputation \typeD. The impact of $p_\VRF$ on the performance of the mechanism, in the particular scenarios, as it is shown on Figure~\ref{unrel-alt-allpa} (a1)-(b1), in all cases setting $p_\VRF=0.5$ initially improves the performance of the mechanism.

The results of Figure~\ref{unrel-alt-allpa} (a1)-(b1) are confirmed by Theorem~\ref{theorem_t1_t2}. Through the simulation results, we have observed that eventual correctness happens (i.e., no more erroneous results are further accepted) when the system converges, for the depicted scenarios. As for Theorem~\ref{theorem_t4} we have observed that, although the condition of having 5 altruistic with $d=1$ is not the case for scenarios S2 and S3, in the particular scenarios simulated the system was able to reach eventual correctness. Although from the depicted scenarios reputation \typeD seems like is a good approach, theory tells us that it can only be used when we have info on the workers types.

Figure~\ref{unrel-alt-allpa} (a2)-(b2), depicts more scenarios with different workers types ratios, in the presence of rational and malicious workers. Following the same methodology as before, we compare a base case (scenario S4) where all workers are rational with $d=1$, with a scenarios where one rational with $d=1$ exists and the rest are rational (scenario S5) or malicious (scenario S6) with $d=0.5$. We can observe that in the base scenario S4, the mechanism is performing better than in the other two scenarios, for reputation metrics (1),(2) and (4), independently of the reputation type. What we observe is that the most difficult scenario for the mechanism to handle is scenario S5, independently of the reputation type, because, although the system converges, eventual correctness has not been reached and the master is accepting incorrect replies for a few more rounds before reaching eventual correctness. This is due to the ratio of the workers' type, and some rational workers that have not been fully reinforced to a correct behavior may have a greater reputation than the rational worker with $d=1$, while the master has already dropped $p_\VRF=p_\VRF^{min}$. That would mean that the master would accept the result of the majority that might consist of rational workers that cheat. As we can see, \typeB is performing worse than the other two types, based on metric (4). 
As for reputation \typeA we can see that, for scenarios S4 and S5, although the variation on the convergence round is greater than reputation~\typeD, this is traded for half the auditing that reputation \typeD requires. As for scenario S6 (with malicious workers), reputation \typeA converges much slower, while the number of audits is roughly the same, compared to reputation \typeD. This observation gives a slight advantage to reputation \typeD for scenario S6, while reputation \typeA has an advantage on S5.\vspace{-.8em}          

\paragraph{\bf\em Discussion.}
One conclusion that is derived by our simulations is that, in the case of full availability, reputation~\typeD is not a desirable reputation type if the pool of workers is large. As simulations showed us, convergence is slow, and expensive in terms of auditing. One could select one of the other two reputation types (according to the available information on the ratio of workers' type), since accepting a few more incorrect results is traded for fast eventual correctness and low auditing cost. Additionally, in the scenario with full availability we have noticed that, selecting initially $p_\VRF=1$ is a ``safer'' option to have small number of incorrect results accepted, if no information on the system is known and the master is willing to invest a bit more on auditing.

For the case of partial availability, the simulations with only altruistic or with altruistic and malicious converged in all cases. This was expected due to the analysis in all cases except in S2 with reputation \typeD, when we expected to see some rounds after convergence with no replies. The fact is that the altruistic worker with full availability
was able to be selected forever in al cases.
Simulations have also shown that, in the presence of malicious and altruistic workers, reputation \typeD has an advantage compared to the other two types. Finally, it is interesting to observe that, in the partial availability case with only rational workers, our mechanism has not reached eventual correctness when the system has converged, but a few rounds later. This means that, although the rational workers are partially available, the mechanism is able to reinforce them to an honest behavior eventually.\vspace{-1em}

\noindent \paragraph{\bf Acknowledgments:}  
Supported in part by MINECO grant TEC2014- 55713-R, Regional Government of Madrid (CM) grant Cloud4BigData (S2013/ICE-2894, co- funded by FSE \& FEDER), NSF of China grant 61520106005, EC H2020 grants ReCred and NOTRE,  U. of Cyprus (ED-CG2015) and the MECD grant FPU2013-03792.

\vspace*{-4mm}


\newpage
\appendix

\section*{Appendix}

\section{Omitted Proofs}
\label{app:Proofs}

\paragraph{\bf Proof of Observation~\ref{obs1}.}
Let $W^1$ be the subset of $n$ workers chosen by the master in the first round. Since initially there is no knowledge on the type of each worker, there is a positive probability $p$ that all workers in $W^1$ are malicious. By assumption $W^r=W^1$ for all $r>1$, and hence there is a probability $p>0$ that the workers are chosen by the master in each round are all malicious. Assume this happens, then we claim that eventual correctness cannot be satisfied. Assume otherwise; hence, by definition of eventual correctness, there is a round $r_0$ such that in all rounds $r \geq r_0$ the master uses $p_\VRF=p_\VRF^{min}<1$. But then, the probability that the master obtains the correct task result in round $r$ cannot be 1, as required by the eventual correctness property, since with probability $1-p_\VRF>0$ all the received replies are incorrect and the master does audit them. Hence, eventual correctness cannot be satisfied.
For the sake of contradiction, assume that the master does not change workers over rounds and eventual correctness is achieved. That is, there is a round $r_0$ such that for all rounds $r \geq r_0$ the master uses $p_\VRF=p_\VRF^{min}<1$ and obtains the correct answer with probability 1, even though the master never changes workers. Let $W$ be the subset of n workers chosen by the master that will never change. Given that the type of each worker is unknown, that workers are chosen uniformly at random, and that there are at least n malicious workers, there is a probability $p>0$ that the master chooses only malicious workers. Consider round $r_0$. In round $r_0$, there is a probability $1-p_\VRF>0$ that the master does not audit. Thus, the probability that the master obtains the correct answer in round $r_0$ is 
$1-p(1-p_\VRF)<1$, which is a contradiction.
\qed

\paragraph{\bf Proof of Observation~\ref{obs2}.}
For the sake of contradiction assume that every non-malicious worker $i$ has $d_i<1$ and eventual correctness is satisfied. Then, by definition
of eventual correctness, there is a round $r_0$ such that in all rounds $r \geq r_0$ the master uses $p_\VRF=p_\VRF^{min}<1$.
In any round $r \geq r_0$ there is a positive probability that the master does not audit and all the replies received (if any) are incorrect. Then, there is a positive probability that the master does not obtain the correct task result, which is a contradiction.
\qed

\paragraph{\bf Proof of Theorem~\ref{theorem_t1_t2}.}
First, observe that the responsiveness reputation of worker $i$ will always be $\rho_{rs_i}=1$, since $d_i=1$. In fact, all workers $j$ with $d_j=1$ will have responsiveness reputation $\rho_{rs_j}=1$ forever. Moreover, for any worker $k$ with $d_k<1$ that is selected by the master an infinite number of rounds, with probability 1 there is a round $r_k$ in which $k$ is selected but the master does not receive its reply. Hence, $\rho_{rs_k}(r)<1$ for all $r>r_k$.

Let us now consider truthfulness reputation (of types \typeA and \typeB). All altruistic workers $j$ (including $i$) have truthfulness reputation $\rho_{tr_j}=1$ forever, since the replies that the master receives from them are always correct. Malicious workers, on the other hand, fall in one of two cases. A malicious worker $k$ may be selected a finite number of rounds. Then, there is a round $r'_k$ after which it is never selected. If, conversely, malicious worker $k$ is selected an infinite number of rounds, since $d_k>0$ and $p_\VRF \geq p_\VRF^{min}>0$, its replies are audited an infinite number of rounds, and there is a round $r'_k$ so that $\rho_{tr_k}(r)<1/n$ for all $r>r'_k$.

Hence, there is a round $R$ such that, for all rounds $r >R$, (1) every malicious worker $k$ has $\rho_{tr_k}(r)<1/n$ or is never selected by the master, and (2) every worker $k$ with $d_k<1$ has $\rho_{rs_k}(r)<1$ or is never selected by the master. Since there is at least worker $i$ with reputation $\rho_i=1$, we have that among the $n$ workers in $W^r$, for all rounds $r >R$, there is at least one altruistic worker $j$ with $d_j=1$ and $\rho_j=1$, and the aggregate reputation of all malicious workers is less than 1. Hence,
the master always gets correct responses from a weighed majority of workers. This proves the claim.
\qed

\end{document}